\def\cite{\citep}
\newif\ifshort
\newcommand{\shortsum}{\iftoggle{short}{\textstyle\sum}{\sum}}
\newtheorem{thm}{Theorem}[section]
\newtheorem{cor}[thm]{Corollary}
\newtheorem{lem}[thm]{Lemma}
\newtheorem{ex}[thm]{Example}
\newtheorem{fact}[thm]{Fact}
\newcommand{\setplus}{\ensuremath{\oplus}}
\newcommand{\vanp}{vanish-prime}
\newcommand{\diffp}{difference-prime}
\newcommand{\la}{\langle}
\newcommand{\ra}{\rangle}
\newcommand{\bigO}{\ensuremath{\mathcal{O}}}
\newcommand{\bigoh}{\bigO}
\newcommand{\softohmu}{\ensuremath{\widetilde{\mathcal{O}_\mu}}}
\newcommand{\softoh}{\ensuremath{\widetilde{\mathcal{O}}}}
\newcommand{\ring}{\ensuremath{\mathsf{R}}}
\newcommand{\R}{\ring}
\newcommand{\FF}{\ensuremath{\mathbb{F}}}
\newcommand{\RR}{\ensuremath{\mathbb{R}}}
\newcommand{\ZZ}{\ensuremath{\mathbb{Z}}}
\newcommand{\QQ}{\ensuremath{\mathbb{Q}}}
\newcommand{\fail}{\ensuremath{\mathtt{Fail}}}
\renewcommand{\S}{\ensuremath{\mathbb{S}}}
\newcommand{\V}{\ensuremath{V}}
\newcommand{\rem}{\ensuremath{\ \mathrm{rem}}\ }
\newcommand{\brem}{\ensuremath{\ \overline{\mathrm{rem}}\ }}
\newcommand{\polylog}{\ensuremath{\mathsf{polylog}}}
\DeclareMathOperator{\poss}{\ensuremath{\mathsf{poss}}}
\DeclareMathOperator{\supp}{\ensuremath{\mathsf{supp}}}
\newcommand{\norm}[1]{\ensuremath{\left\lVert#1\right\rVert}}
\newcommand{\A}{\mathcal{A}}
\newcommand{\B}{\mathcal{B}}
\DeclareMathOperator{\width}{\ensuremath{\mathsf{diam}}}
\newcommand*{\MyDef}{\mathrm{def}}
\newcommand*{\eqdefU}{\ensuremath{\mathop{\overset{\MyDef}{=}}}}%
\newcommand*{\eqdef}{\,\mathop{\overset{\MyDef}{\resizebox{\widthof{\eqdefU}}{\heightof{=}}{=}}}\,}
\definecolor{darkgreen}{rgb}{0,.35,0}
\definecolor{darkblue}{rgb}{0,0,.35}
\definecolor{darkred}{rgb}{.35,0,0}
\newcommand{\tit}{Output-sensitive algorithms for
sumset and sparse polynomial multiplication}
\author{Andrew Arnold \and Daniel S. Roche}
\title{\tit}
\author{Andrew Arnold\\
\small Cheriton School of Computer Science\\
\small University of Waterloo\\
\small Waterloo, Ontario, Canada\\
\href{http://www.AndrewArnold.ca}{\tt a4arnold@uwaterloo.ca}
\and
Daniel S.\ Roche\\
\small Computer Science Department\\
\small United States Naval Academy\\
\small Annapolis, Maryland, USA\\
\href{http://www.usna.edu/cs/roche/}{\tt roche@usna.edu}}
\begin{document}

\maketitle

\begin{abstract}

We present randomized algorithms to compute the sumset  
(Minkowski sum) of two integer sets, and to multiply two
univariate integer polynomials given by sparse representations.
Our algorithm for sumset has cost softly linear in the combined size of the inputs
and output.  This is used as part of our sparse multiplication algorithm,
whose cost is softly linear in the combined size of the inputs,
output, and the sumset of the supports of the inputs.  
As a subroutine, we present a new method for computing the coefficients
of a sparse polynomial, given a set containing its support.
Our multiplication algorithm extends to multivariate Laurent
polynomials over finite fields and rational numbers.  Our
techniques are based on sparse interpolation algorithms and
results from analytic number theory.

\end{abstract}

\section{Introduction}

Sparse polynomials are a fundamental object in computer algebra.
Computer algebra programs including 
Maple, Mathematica, Sage, and Singular use a sparse
representation by default for multivariate polynomials, and there has
been considerable recent work on how to efficiently store and compute
with sparse polynomials \cite{Fat03,GL06,MP09,Roc11,HL12a}.

However, despite the memory advantage of sparse polynomials,
the alternative dense representation is still widely used for an
obvious reason: speed.
It is now classical \cite{CanKal91}
that two degree-$D$ dense polynomials can be multiplied in softly linear
time: $\bigoh(D\log D\log\log D)$ ring operations, and even better in
many cases \cite{HarvdHLec14b}. By contrast, two size-$T$ sparse
polynomials require $\bigoh(T^2)$ operations, and this
excludes the potentially significant cost of exponent
arithmetic.

Much of the recent work on sparse arithmetic has focused on ``somewhat
dense'' or structured cases, where the sparsity of the product is
sub-quadratic
\cite{MP09,Roc11,HL12a}. At the same time, sparse interpolation
algorithms, which in the fastest case can learn an unknown $T$-sparse
polynomial from $\bigoh(T)$ evaluations, have gained renewed interest
\cite{CL08,JM10,CKP12,ArnGieRoc14b}. 

Most closely related to the current work, \cite{vdHLec13}
recently presented algorithms to discover the coefficients of a
sparse polynomial product, provided a list of the exponents and some
preprocessing.  
In the context of pattern matching problems,
\cite{CH02} gave a Las Vegas algorithm to multiply
sparse polynomials with nonnegative integer coefficients
whose cost is $\softoh(T \log^2 D)$.

A remaining question is
whether output-sensitive sparse multiplication is possible in
time comparable to that of dense
multiplication. This paper answers that question,
with three provisos: First, our complexity is proportional to the
``structural sparsity'' of the output that accounts for exponent
collisions but not coefficient cancellations; 
second, our algorithms are randomized and may
produce incorrect results with controllably low probability;
and third, we ignore logarithmic factors in the size
of the input.

To explain the first proviso, define for a polynomial $F$ its
\emph{support} $\supp(F)$
to be the set of exponents of nonzero terms in $F$. The \emph{sparsity} of $F$,
written $\#F$, is exactly $\#\supp(F)$. 
For two polyomials $F$ and $G$, we have
$\#\supp(FG) \le \#F \cdot \#G$.
But in
many cases the set of \emph{possible exponents}
$$\poss(F,G) \eqdef \{e_F+e_G : e_F\in
\supp(F), e_G\in\supp(G)\}$$
is much smaller than $\#F \cdot \#G$. This \emph{structural sparsity}
$T=\#\poss(F,G)$, is an upper bound on the
actual sparsity $S=\#\supp(FG)$ of the product.
Strict inequality $S < T$ occurs only
in the presence of \emph{coefficient
cancellations}. Part of our algorithm's cost
depends only on the actual sparsity, and part depends
on the potentially-larger structural sparsity.

Our algorithms have not yet been carefully implemented, and we do
not claim that they would be faster than the excellent
software of \cite{GL06,MP13} and others for a wide range of practical
problems. However, this complexity improvement indicates that the
barriers between sparse and dense arithmetic may be weaker than we once
thought, and we hope our work will lead to practical improvements in the
near future.

\subsection{Our contributions}

Our main algorithm is summarized in Theorem~\ref{thm:main}.
Here and throughout, we rely on a version of ``soft-oh'' notation
that also accounts for a bound $\mu$ on the probability of failure: 
$\softohmu(\phi) \eqdef \bigoh\big(\phi \cdot\polylog(\phi/\mu)\big)$, for
any function $\phi$, where $\polylog$ means $\log^c$ for some fixed
$c>0$ \cite[see sec.~25.7]{MCA}.

\begin{thm}\label{thm:main}
  Given $F,G\in\ZZ[x]$ with degree bound $D > \deg F + \deg G$ and height
  bound
  $C \ge \norm{F}_\infty + \norm{G}_\infty$, and $\mu\in(0,1)$,
  Algorithm 
  \ref{proc:sparsemulzz} correctly computes the product $H=FG$
  with probability exceeding $1-\mu$,
  using worst-case expected $\softohmu(S\log C + T\log D)$ bit operations,
  where $S = \#\supp(FG)$ and $T = \#\poss(F,G)$ are the actual and
  structural sparsity of the product, respectively.
\end{thm}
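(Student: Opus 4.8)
\medskip\noindent\textbf{Proof sketch.}
The plan is to build the algorithm from two ingredients --- a \emph{sumset} routine that outputs the possible-exponent set $\mathcal{S}=\poss(F,G)$, and a \emph{coefficient-recovery} routine that, given any superset of $\supp(H)$, reconstructs the coefficients of $H=FG$ on it modulo a prescribed prime --- and to invoke the coefficient recovery in two phases. The reason for two phases is the shape of the target bound: any step that touches all $T$ possible exponents must be carried out with integers of bit size only $\softohmu(\log D)$, so a first phase modulo one small prime pins down $\supp(H)$ exactly, after which a second phase recovers the $S$ genuinely nonzero coefficients --- which can be as large as $DC^2$ --- with full $\softohmu(\log D+\log C)$-bit precision.

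First I would read $\supp(F)$ and $\supp(G)$ off the sparse inputs and call the sumset algorithm to get $\mathcal{S}=\supp(F)+\supp(G)$ with $\#\mathcal{S}=T$; since $\#F,\#G\le T$ whenever both inputs are nonzero (the zero case being trivial), this costs $\softohmu(T\log D)$. The core subroutine is then: given $\mathcal{U}\supseteq\supp(H)$ with $\#\mathcal{U}=\softohmu(T)$ and a bit-length $b$, report $H_e\bmod q$ for all $e\in\mathcal{U}$, where $q$ is a prime of bit length $\Theta(\max(b,\log D))$ chosen as follows. Pick a prime $p$ with $\#\mathcal{U}<p=\softohmu(\#\mathcal{U})$ and a compatible prime $q\equiv1\pmod p$ (obtainable by standard estimates on primes in arithmetic progressions), so that $\FF_q$ contains a primitive $p$-th root of unity $\omega$ and an element $\gamma$ of multiplicative order exceeding $D$. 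Fold $F$ and $G$ modulo $x^p-1$, collapsing exponents into residues mod $p$, and apply length-$p$ FFTs over $\FF_q$ to obtain $F(\omega^j)$ and $G(\omega^j)$ for $0\le j<p$, and likewise for a few shifted progressions $\gamma^i\omega^j$; multiply pointwise to get the matching evaluations of $H$; then, for each residue $r$ mod $p$, solve the transposed-Vandermonde system whose unknowns are the $H_e$ with $e\in\mathcal{U}$ and $e\equiv r\pmod p$, whose nodes are the corresponding $\gamma^e$, and whose right-hand sides are the $r$-th entries of the inverse FFTs. With $p=\softohmu(\#\mathcal{U})$, each $\FF_q$-operation costing $\softohmu(\log q)$ bits, and the fold costing $\softohmu(\log D)$ per term of $F$ and $G$, this runs in $\softohmu((\#F+\#G)\log D+\#\mathcal{U}\log q)$ bit operations.

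Now run the subroutine on $\mathcal{U}=\mathcal{S}$ with bit-length parameter $\Theta(\log(TD/\mu))$ --- so the modulus $q_0$ has that bit size. Each nonzero integer $H_e$, having fewer than $\log_2(DC^2)$ prime divisors, stays nonzero mod $q_0$ with probability $1-\mu/(2T)$, so the reported support equals $\supp(H)$ with probability at least $1-\mu/2$, at cost $\softohmu(T\log D)$ (using $T<D^2$, so $\log T=O(\log D)$). Then run the subroutine again on $\mathcal{U}=\supp(H)$, once for each of $O(1+\log C/\log D)$ runs with bit-length $\Theta(\log D)$ and pairwise-distinct moduli $q$ whose product exceeds $2DC^2$, and combine the residues by CRT to recover the exact integer coefficients; by the cost above this is $\softohmu(T\log D+S\log C)$. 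Output $H=\sum_e H_e x^e$, then verify it by checking $F(\xi)G(\xi)\equiv H(\xi)$ at a uniformly random $\xi$ modulo a fresh random prime of bit size $\softohmu(\log D)$. Every randomized object (the primes $q_0,p,q$, the root $\omega$, the shifts $\gamma^i$, the point $\xi$) is either verified at the moment it is used --- for instance, a Vandermonde solve that meets a coincident node exposes a bad $\gamma$ or $p$ --- or caught by this final test, so a union bound keeps the overall failure probability below $\mu$; a failed check merely triggers a restart, which succeeds with probability at least $\tfrac12$, so the expected number of restarts is $O(1)$ and the total expected bit cost is $\softohmu(S\log C+T\log D)$.

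The step I expect to be the main obstacle is collision handling inside the core subroutine. Because $p$ can be only nearly linear in $\#\mathcal{U}$ --- any larger and the length-$p$ FFT would exceed the budget --- a constant fraction of the residue classes of $\mathcal{U}$ mod $p$ may contain two or more exponents, so the bare transposed-Vandermonde solve degenerates. I would attack this with two levers: the shifted evaluations $\gamma^i\omega^j$, which turn each class into a Vandermonde system in the pairwise-distinct values $\gamma^e$ --- this is what forces $\gamma$ to have order exceeding $D$, hence the ambient field (and each arithmetic operation) to cost $\softohmu(\log D)$ bits; and iteration over a sequence of fresh random primes $p$, peeling off the collision-free classes in each round and recursing on the residual set of exponents, whose expected size shrinks geometrically --- precisely the pattern the sumset routine follows. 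The remaining work is to prove that $\softohmu(1)$ shifts and $\softohmu(1)$ rounds suffice with probability $1-\mu$, which is where the analytic number theory enters: it rests on the bound that any nonzero pairwise difference of exponents, being a positive integer below $D$, has at most $\log_2 D$ distinct prime divisors, so a random prime of the allotted size is very unlikely to fold many elements of $\mathcal{U}$ together.
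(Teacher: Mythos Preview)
Your three-phase plan --- compute $\poss(F,G)$ by a sumset routine, run a known-support multiplication modulo one small prime to isolate $\supp(FG)$, then run it again on the true support with full precision --- is exactly the paper's architecture, and your identification of collision handling as the central difficulty in the known-support subroutine is spot on.

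Where you diverge is in how that subroutine is built, and there is a cost gap in your version. You take $p=\softohmu(\#\mathcal{U})$, accept that a constant fraction of the classes of $\mathcal{U}\bmod p$ collide, and separate colliding exponents via an auxiliary $\gamma\in\FF_q^*$ of multiplicative order exceeding $D$. But to evaluate $F(\gamma^i\omega^j)$ you must form $f_e\gamma^{ie}$ for every $e\in\supp(F)$, and by repeated squaring each $\gamma^e$ costs $O(\log D)$ multiplications in $\FF_q$ --- that is $\softohmu(\log D\cdot\log q)=\softohmu(\log^2 D)$ bit operations per term, not the $\softohmu(\log D)$ you book for ``the fold.'' The same holds for the Vandermonde nodes $\gamma^e$, $e\in\mathcal{U}$. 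Consequently your subroutine actually costs $\softohmu\big((\#F+\#G+\#\mathcal{U})\log^2 D\big)$ per modulus, and over the $O(\log C/\log D)$ moduli of phase two this becomes $\softohmu(T\log D\log C)$, missing the target $\softohmu(S\log C+T\log D)$ by a factor of $\log D$. Your peeling fallback could be made to work on its own without any $\gamma$ (and then $q$ need not exceed $D$), but as written it is a supplement to the $\gamma$ mechanism rather than a replacement.

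The paper sidesteps all of this with a single observation you are missing: since only $\log p$ appears in the running time after the one-time exponent reduction, one can afford $p=\softohmu\big((\#\S)^2\log D\big)$ --- a \emph{good} difference-prime for $\S$ --- so that there are \emph{no} collisions at all. Then the reduced exponents $e\bmod p$ are pairwise distinct, the nodes $\omega^{e\bmod p}$ are distinct, and a single size-$\#\S$ transposed-Vandermonde solve over $\FF_q$ recovers every coefficient mod $q$. Crucially each $q\equiv 1\pmod p$ has $\log q=O(\log p)$, which is only $\polylog(\#\S,\log D)$; no element of order $D$ is ever needed, and computing $\omega^{e\bmod p}$ costs $O(\log p)$ field operations rather than $O(\log D)$. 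This is what yields $\softohmu(\#\S\log D+\#\S\log C)$ cleanly. Your final verify-and-restart loop is also an addition not present in the paper, which is purely Monte Carlo and bounds the failure probability by a union over four $\mu/4$-events.
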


\iftoggle{short}{}{
The linear dependence on $\log D$ and $\log C$ means this
result extends easily to multivariate polynomials and finite field
coefficients, using the Kronecker substitution.
}

Our algorithm relies on two subroutines, both of
which are based on techniques from sparse interpolation and rely on
number-theoretic results on the availability of primes.

The first subroutine \ref{proc:sumset}$(\A,\B)$ computes the
\emph{sumset} of two sets of integers $\A$ and $\B$, defined as
$$\A \setplus \B \eqdef \{a+b : a\in \A, b \in \B\}.$$
This algorithm, which may be of independent interest, has softly-linear
complexity in the size of the output $\A \setplus \B$.

The second subroutine \ref{proc:knownsupp}($F,G,\S$) 
requires a set containing $\supp(FG)$ in order to compute $FG$
in time softly-linear in the input and output sizes.
It is based on an algorithm in \cite{vdHLec13}, but is more efficient
for large exponents.

The main steps of our multiplication algorithm are:

\begin{enumerate}[noitemsep,nolistsep]
  \item Use \ref{proc:sumset} to compute $\poss(F,G)$.
  \item Run \ref{proc:knownsupp} with $\S=\poss(F,G)$ but with
    smaller coefficients, to discover the true $\supp(FG)$.
  \item Run \ref{proc:knownsupp} again, with the smaller exponent
    set $\supp(FG)$ but with the full coefficients.
\end{enumerate}

Steps 1 and 2 work with a size-$T$ exponent set but with small
coefficients, and both contribute $\softohmu(T\log D)$ to the overall bit
complexity. Step 3 uses the size-$S$ true support but with the full
coefficients, and requires $\softohmu(S (\log D + \log C))$ bit
operations, for a total of $\softohmu(T\log D + S\log C)$.

\subsection{Organization of the paper}

Section~\ref{sec:prelim} states our notational conventions and some
standard results, and Section~\ref{sec:redexp} contains the technical 
number theoretic results on which we base our
randomizations.

Section~\ref{sec:basecase} revisits and adapts our sparse interpolation
algorithm from ISSAC 2014 that will be a subroutine for our sumset algorithm,
presented in section~\ref{sec:sumset}.

Our new method to find the coefficients, once the support is known, is
presented in Section~\ref{sec:knownsupp}.
This is then used in concert
with our sumset algorithm in Section~\ref{sec:smul} to describe fully the
algorithm of Theorem~\ref{thm:main}, and also to explain how this can be
easily extended to output-sensitive sparse multiplication over 
$\R[x_1^{\pm 1},\ldots,x_n^{\pm 1}]$, where $\R$ is $\ZZ_m$, $\QQ$, or $\mathsf{GF}(p^e)$.

\section{Background and Preliminaries}\label{sec:prelim}

We count the cost of our algorithms in terms of bit complexity on a
random-access machine.  We state their costs using $\softohmu$ notation,
meaning that our algorithms have a factor
$\log^c \tfrac{1}{\mu}$ in the running time. We can make $c=1$
by running the entire algorithm with error bound $\tfrac{2}{3}$ some
$\bigoh(\log\tfrac{1}{\mu})$ times, then returning the most frequent
result.

Our main algorithm depends on an unknown number-theoretic constant, as
discussed in Section \ref{sec:redexp}. Thus we have proven only the
\emph{existence} of a Monte Carlo algorithm. We also discuss how this
could be easily handled in practice.

Our randomized procedures return either the correct answer
(with controllable probability $1-\mu$), or an incorrect
answer, or the symbol \fail. Whenever a subroutine returns
\fail{}, we assume the calling procedure returns \fail{} as well.

\subsection{Notation and Representations}\label{sec:assumptions}

We let $\ring$ denote a commutative ring with identity.  For $F \in \ring[x]$ we let $\langle F \rangle \subset \ring[x]$ denote the ideal generated by $F$.  For $n, m
\in \ZZ$, $m > 0$, we let $n \brem m$ and $n \rem m$ denote the 
integers $s \in [0,m)$ and  $t \in [-m/2,m/2)$ respectively,
such that $n \equiv s \equiv t \pmod{m}$.  
We write $\ZZ_m$ for $\ZZ/m\ZZ$, typically represented as
$\{n \rem m | n\in\ZZ\}$.

Unless otherwise stated we assume $F\in\R[x]$ is of the form
\begin{equation}\label{eqn:sparserep}
\textstyle F(x) = \sum_{1\le i\le S} c_ix^{e_i},
\end{equation}
with coefficients $c_i\in\R$ and exponents
$e_i\in\ZZ_{\ge 0}$.  Often we assume each $c_i\ne 0$ and the
exponents are sorted $e_1<\cdots<e_S$, but it is sometimes useful
to relax these conditions.

We write $S\ge\#F$ and $D>\deg F$ for the sparsity
and degree bounds. When $\ring=\ZZ$ we write
$C>\norm{F}_\infty \eqdef \max_i |c_i|$ for the {\em height}
of $F$.  We also use the norm $\norm{F}_1 = \sum_i |c_i|$.

More generally, we consider multivariate Laurent polynomials
\iftoggle{short}{%
$F = \sum_{i=1}^S c_ix_1^{e_{i1}}\cdots x_n^{e_{in}} \in \ring[x_1^{\pm
1}, \dots, x_n^{\pm 1}]$. 
}{%
$$F = \sum_{i=1}^S c_ix_1^{e_{i1}}\cdots x_n^{e_{in}} \in \ring[x_1^{\pm
1}, \dots, x_n^{\pm 1}].$$
}%
In the case $\ring=\ZZ$, the \emph{sparse representation} of $F$
consists of a tuple
$(n,C,D,S)$, followed by a list of $S$ tuples 
$(c_i, (e_{i1},\ldots,e_{in}))$, where each $c_i$ is stored using
$\Theta(\log C)$ bits and each $e_{ij}$ is stored using $\Theta(\log D)$
bits. 
\iftoggle{short}{The total size in this case is
$\Theta(S\log C + Sn\log D)$. }{}%
When $\ring$ is instead a finite ring, $C$ is omitted and each
 $c_i$ is stored using $\Theta(\log|\ring|)$ bits.

When multiplying $F,G\in\ZZ[x]$, we assume shared bounds $C$ and $D$
so that total input/output size is
\begin{equation}\label{eqn:sparsesize}
\softoh((\#F + \#G + \#(FG))(\log C + \log D)),
\end{equation}
given that $\norm{FG}_1 < C^2\min(\#F,\#G)$.

The \emph{dense representation} of an $n$-variate
polynomial $F$ 
is an $n$-dimensional array of $D^n$
coefficients, where exponents are implicitly stored as array
indices. In the case that $\ring=\ZZ$ with bound $C$ as above, this
requires $\Theta(D^n \log C)$ bits.

The terms ``sparse polynomial'' and ``dense polynomial'' refer 
only to the choice of representation, and not to the relative 
number of nonzero coefficients. Typically, we assume $F$ is 
sparse and reserve $\tilde{F}$ to indicate a dense polynomial.
Converting between the sparse and dense representations 
has softly linear cost in the combined
input/output size.

When computing a sumset $A \oplus B$,
we assume that every integer in $A$ or $B$ is
represented using $\Theta(\log D)$
bits, where $D > \max(\norm{\A}_\infty,\norm{B}_\infty)$.
\iftoggle{short}{}{In this setting the combined bit-size of $A,B$, and $A \oplus B$
is $\softoh( (\#A + \#B + \#(A \oplus B))\log D)$.}

\subsection{Integer and Polynomial Arithmetic}

We cite the following results from integer and polynomial
arithmetic, which we use throughout.

\iftoggle{short}{
\begin{fact} \cite[Ch.~8, 10]{MCA} \label{fact:int_arithmetic} \label{fact:dense_arithmetic} \label{fact:crt}
  The following can be computed in softly-linear time in the bit-length
  of the inputs:
  \begin{itemize}[noitemsep,nolistsep]
    \item $m\pm n$, $mn$, $m\rem n$, $m \brem n$, for any $m,n\in\ZZ$ 
    \item Arithmetic in $\ZZ_m$ for any $m\in\ZZ_{>0}$ 
    \item Arithmetic on dense polynomials in $\ZZ_m[x]$
    \item CRT: given integers
    $(v_i,m_i),\ldots,(v_N,m_N)$,
    determine $v\in\ZZ$, $v < \prod_i m_i$, such that
      $\forall i, v \equiv v_i \pmod{m_i}$.
  \end{itemize}
\end{fact}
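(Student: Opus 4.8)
The plan is to assemble this entirely from the classical fast-arithmetic toolkit; there is nothing new to prove, only standard algorithms to cite, as in \cite[Ch.~8--10]{MCA}. The cornerstone is FFT-based integer multiplication: two $n$-bit integers can be multiplied in $M(n) = \bigoh(n \log n \log\log n)$ bit operations (Sch\"onhage--Strassen), or faster by more recent methods; in any case $M(n) = \softoh(n)$, and addition and subtraction cost $\bigoh(n)$. Division with remainder --- hence both $m \rem n$ and $m \brem n$ --- reduces to multiplication: compute a truncated reciprocal of $n$ by Newton iteration, each step doubling the working precision at the cost of a bounded number of multiplications, then recover the quotient and remainder with one more multiplication and a subtraction, for a total of $\bigoh(M(n)) = \softoh(n)$; the sign-balanced variant $\brem$ differs from $\rem$ by at most one further $\pm n$.

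For arithmetic in $\ZZ_m$ I would represent each class by its least nonnegative representative in $[0,m)$: addition and subtraction are then a word-wise operation plus one conditional correction by $\pm m$, costing $\bigoh(\log m)$, while multiplication is an integer product of two numbers below $m$ followed by a reduction modulo $m$, so $\softoh(\log m)$. For dense polynomials in $\ZZ_m[x]$ of degree below $d$, addition and subtraction are coefficient-wise, $\bigoh(d\log m)$; multiplication can be done by Kronecker substitution --- pack the two operands into single integers of $\bigoh(d(\log d + \log m))$ bits (each product coefficient is a sum of at most $d$ terms below $m^2$), multiply those integers, unpack, and reduce the $d$ coefficients modulo $m$ --- which is softly linear in the input bit-length $d\log m$; division with remainder is again a Newton iteration on the reversed polynomials, hence a bounded number of polynomial multiplications.

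For the Chinese Remainder step, write $B = \sum_i \lceil \log_2 m_i\rceil$ for the total bit-length of the moduli, and note that each residue $v_i$ has $\bigoh(\log m_i)$ bits, so the whole instance has size $\bigoh(B)$. I would build a \emph{subproduct tree}: a balanced binary tree with the $m_i$ at the leaves, each internal node storing the product of the moduli in its subtree. The products at any single level have total bit-length $\bigoh(B)$, and the tree has $\bigoh(\log N)$ levels, so it is built in $\softoh(B)$ bit operations. Then I would merge the residues by the usual recursive CRT formula while walking the tree: at each node the two child residues are combined into a residue modulo the node's product using a fixed number of modular multiplications, costing $\softoh(B)$ per level and $\softoh(B)$ in all, and the final $v$ satisfies $v < \prod_i m_i$ as required.

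The only point that needs care --- and it is not a real obstacle, since everything is textbook --- is to keep \emph{every} bound expressed in terms of the total bit-length of the instance rather than, say, $N$ times the largest modulus or degree; this is exactly what the level-by-level subproduct-tree accounting gives, and the ``soft-oh'' notation comfortably absorbs both the $\log\log$ factor from fast multiplication and the $\bigoh(\log N)$ tree depth.
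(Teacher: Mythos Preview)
Your sketch is correct and matches the standard textbook development; the paper itself does not prove this statement at all but simply records it as a Fact with a citation to \cite[Ch.~8,~10]{MCA} (and, in the long version, to the Harvey--van der Hoeven--Lecerf preprints). Your outline --- Sch\"onhage--Strassen for integer multiplication, Newton iteration for division with remainder, Kronecker substitution for dense polynomial products, and the subproduct-tree CRT --- is exactly what those references contain, so there is nothing to compare beyond noting that you have supplied the content the paper chose to omit.
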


}{ %
\begin{fact}[\cite{HarvdHLec14}, Thm~9.8; \cite{MCA}, Cor.~9.9]
  \label{fact:int_arithmetic}
  Let $m,n \in \ZZ$.  Then the following may be computed using
  $\softoh( \log m + \log n)$ operations: $m \pm n, mn, m \rem n,
  m \brem n$.  Also, arithmetic operations in $\ZZ_n$
  require $\softoh( \log n)$ bit operations.
\end{fact}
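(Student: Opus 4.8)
The plan is to reduce every operation listed to one deep ingredient---fast integer multiplication---and to dispatch the rest by elementary reductions. Write $N \eqdef \lceil\log m\rceil + \lceil\log n\rceil$ for the combined bit-length of the inputs. Addition and subtraction are immediate: schoolbook ripple-carry computes $m\pm n$ in $\bigoh(N)$ bit operations, which is already softly linear, so nothing more is needed there.

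The core is that $mn$ can be computed in $\softoh(N)$ bit operations. For this I would invoke an FFT-based multiplication algorithm: the classical Sch\"onhage--Strassen bound $\bigoh(N\log N\log\log N)$ already suffices, and the sharper bounds of F\"urer and of Harvey--van der Hoeven--Lecerf \cite{HarvdHLec14} only tighten the $\polylog$ factors absorbed into $\softoh$. This is the one genuinely nontrivial step, and the only real obstacle; since it is precisely the external result cited alongside the statement, I would not reprove it here.

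Given fast multiplication, division with remainder reduces to it by Newton iteration for the reciprocal of the divisor \cite{MCA}: compute a truncated approximation to $1/n$ accurate to $\Theta(N)$ bits by successively doubling the working precision over $\bigoh(\log N)$ rounds, each round costing one multiplication and one subtraction; then recover the quotient $q$ by a single multiplication followed by a correction of $\bigoh(1)$ additions or subtractions of $n$, and set the remainder to $m - qn$. This uses $\bigoh(\log N)$ multiplications in total, hence $\softoh(N)$ bit operations. The balanced remainder $m\rem n$ is then obtained from the nonnegative remainder $m\brem n$ by one conditional subtraction of $n$.

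Finally, for arithmetic in $\ZZ_n$ I would represent elements as integers in $[0,n)$: an addition or subtraction in $\ZZ_n$ is one integer $\pm$ followed by a single conditional add or subtract of $n$, and a multiplication in $\ZZ_n$ is one integer multiplication of two numbers below $n$---giving an at most $2\lceil\log n\rceil$-bit product---followed by one reduction $\brem n$; by the preceding steps each costs $\softoh(\log n)$. I expect the only place in our own reductions needing genuine care is the precision bookkeeping in the Newton iteration, namely ensuring the approximate reciprocal is accurate enough that the final correction takes only $\bigoh(1)$ additions; but this is routine, and all the asymptotic weight rests on the cited fast-multiplication bound.
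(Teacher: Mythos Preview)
Your sketch is correct and follows the standard route: trivial $\pm$, cite fast multiplication, reduce division to multiplication via Newton iteration on the reciprocal, and derive $\ZZ_n$ arithmetic from these. The paper, however, offers no proof of this statement at all---it is stated as a \emph{Fact} with external citations to \cite{HarvdHLec14} and \cite{MCA} and left at that. So there is nothing to compare against beyond noting that your argument is exactly the one underlying the cited Corollary~9.9 in \cite{MCA}; you have supplied what the paper deliberately outsources.
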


\begin{fact}[\cite{MCA}, Thm.~10.25]\label{fact:crt}
  Given $m_i \in \ZZ_{>0}$, and $v_i \brem m_i$ for $1 \leq i \leq t$
  and $M = \prod_{i=1}^t m_i$, one can compute the solution $v \in
  [0,M)$ to the set of congruences $v \equiv v_i \pmod{m_i}$, $1 \leq i
  \leq t$ using $\softoh( \log M)$ bit operations.
\end{fact}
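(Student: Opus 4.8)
The final statement is the classical fast Chinese remaindering cost bound, and I would prove it by the standard subproduct-tree construction, under the hypothesis (implicit in the statement, since otherwise the solution modulo $M=\prod_i m_i$ need not exist or be unique) that the $m_i$ are pairwise coprime. As preprocessing, discard every $m_i=1$; since each surviving $m_i\ge 2$ we may assume $t=\bigoh(\log M)$.

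First I would build a balanced binary \emph{subproduct tree}: a tree with leaves $m_1,\dots,m_t$ in which every internal node stores the product of its two children, so the root stores $M$. The node values at any fixed depth multiply to $M$, hence have total bit-length $\bigoh(\log M)$; since integer multiplication is softly linear (Fact~\ref{fact:int_arithmetic}) and the tree has $\bigoh(\log t)$ levels, building it costs $\softoh(\log M)$ bit operations. Next, a downward pass computes $M_i\bmod m_i$ for every $i$, where $M_i=M/m_i$: store $1$ at the root, and when descending from a node of value $p$ with children of values $n_\ell,n_r$ and stored residue $r_p\equiv M/p\pmod p$, push $r_{n_\ell}=(r_p\,n_r)\bmod n_\ell$ and $r_{n_r}=(r_p\,n_\ell)\bmod n_r$; these are correct because $M/n_\ell=(M/p)\,n_r$ and $n_\ell\mid p$. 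By the same per-level accounting this is $\softoh(\log M)$. Each leaf now holds $r_i\equiv M_i\pmod{m_i}$; I would compute $s_i=r_i^{-1}\bmod m_i$ by the extended Euclidean algorithm (well-defined since $\gcd(M_i,m_i)=1$) in $\softoh(\log m_i)$ operations each, for a total of $\softoh(\sum_i\log m_i)=\softoh(\log M)$, and set $w_i=(v_i s_i)\bmod m_i$.

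Finally, an upward pass accumulates the answer. Define $V_n\in[0,n)$ with $V_n\equiv\sum_{i\in\mathrm{leaves}(n)}w_i\,(n/m_i)\pmod n$; at a leaf $V_{m_i}=w_i$, and at an internal node $V_n=(V_{n_\ell}\,n_r+V_{n_r}\,n_\ell)\bmod n$, valid since $n/m_i=(n_\ell/m_i)\,n_r$ for $i$ in the left subtree (and the error term $k n_\ell$ from reducing $V_{n_\ell}$ gets absorbed upon multiplying by $n_r$, modulo $n=n_\ell n_r$). At the root we obtain $v\eqdef V_M=\bigl(\sum_i w_i M_i\bigr)\bmod M\in[0,M)$; for each $i$ one checks $v\equiv v_i s_i M_i\equiv v_i\pmod{m_i}$, because $m_i\mid M_j$ for $j\ne i$ and $s_i M_i\equiv s_i r_i\equiv 1\pmod{m_i}$. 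The upward pass again costs $\softoh(\log M)$, so the total is $\softoh(\log M)$.

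The one point needing care — and the reason the tree is necessary — is why this beats the direct formula $v=\sum_i v_i M_i s_i\bmod M$: each cofactor $M_i$ has $\Theta(\log M)$ bits, so merely writing down all $t$ of them already costs $\Theta(t\log M)$. The tree avoids ever materializing the $M_i$: the downward pass keeps only the short residues $M_i\bmod m_i$, and the upward pass keeps partial sums reduced modulo subtree products. The bit-length bookkeeping that makes every level total $\bigoh(\log M)$ — including checking that the intermediate reductions such as $n_r\bmod n_\ell$ do not blow anything up — follows entirely from the single invariant that the node values at any fixed depth multiply to $M$.
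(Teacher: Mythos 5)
Your proposal is correct and is essentially the standard subproduct-tree argument from the cited source (von zur Gathen and Gerhard, Thm.~10.25), which the paper itself invokes without proof; the downward pass computing $M_i \bmod m_i$, the per-level bit-length invariant, and the modular upward accumulation are all exactly the textbook route. Your added remarks on why the naive formula fails ($\Theta(t\log M)$ just to write the cofactors) and on the implicit pairwise-coprimality hypothesis are accurate and complete the argument.
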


We use Chinese Remaindering to construct exponents from sets of congruences.  We also use dense polynomial arithmetic as a subroutine, and cite the following results.
\begin{fact}[\cite{HarvdHLec14b}]\label{fact:dense_arithmetic}
  Let $F,G\in\ZZ_m[x]$, $\deg F, \deg G < D$.
  Then the following may be computed using $\softoh(D\log m)$ bit
  operations: $F\pm G$, $FG$, $F \rem G$.
\end{fact}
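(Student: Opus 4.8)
The plan is to reduce each of the three operations to fast integer arithmetic as supplied by Fact~\ref{fact:int_arithmetic}. Throughout, write $F=\sum_{0\le i<D}a_ix^i$ and $G=\sum_{0\le i<D}b_ix^i$ with each $a_i,b_i\in\ZZ_m$ stored by its representative in $[0,m)$. For $F\pm G$ there is nothing beyond $D$ coefficientwise operations $a_i\pm b_i\brem m$, each costing $\softoh(\log m)$ by Fact~\ref{fact:int_arithmetic}, for a total of $\softoh(D\log m)$.

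For the product I would pass through $\ZZ[x]$ by Kronecker substitution. Lift $F,G$ to $\fdiv,\gimag\in\ZZ[x]$ with coefficients in $[0,m)$; then every coefficient of $\fdiv\gimag$ is a sum of at most $D$ products of elements of $[0,m)$, hence a nonnegative integer below $Dm^2$, so it fits in $k$ bits for some $k=\bigoh(\log D+\log m)$ with $2^k>Dm^2$. The integers $u=\fdiv(2^k)$ and $v=\gimag(2^k)$ then have $\bigoh(Dk)$ bits, and the $k$-bit blocks of $uv$ recover the coefficients of $\fdiv\gimag$ with no carries between blocks. By Fact~\ref{fact:int_arithmetic}, $uv$ is computed in $\softoh(Dk)$ bit operations, and reducing the $\bigoh(D)$ recovered coefficients modulo $m$ costs a further $\softoh(D(\log D+\log m))$. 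Since $\log m\ge 1$ and $\log D\ge 1$, we have $D(\log D+\log m)\in\softoh(D\log m)$ — the spare $\polylog$ factor hidden by $\softoh$ absorbs the extra $\log D$ — so multiplication runs in $\softoh(D\log m)$. (If one wanted the sharpest constant of \cite{HarvdHLec14b} one would instead invoke their specialized transforms over finite fields, but for a soft-oh bound this Kronecker reduction is enough.)

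For $F\rem G$ I would invoke the classical Newton-iteration fast-division algorithm. This requires the leading coefficient of $G$ to be a unit in $\ZZ_m$ — the only case in which $F\rem G$ is even well defined when $\ZZ_m$ is not a field — and in every application in this paper $G$ is monic. Granting this, one reverses $F$ and $G$, inverts the reversal of $G$ as a power series modulo $x^{D}$ by doubling the precision over $\bigoh(\log D)$ Newton steps, recovers $F\quo G$, and sets $F\rem G=F-(F\quo G)\,G$. Because the cost at least doubles from step to step, the total is dominated by a constant number of multiplications of polynomials in $\ZZ_m[x]$ of degree $\bigoh(D)$, each $\softoh(D\log m)$ by the previous paragraph.

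The only genuine subtlety — the thing worth checking carefully, rather than the routine soft-oh bookkeeping — is the division step over the non-field $\ZZ_m$: one must confirm that Newton iteration never inverts anything other than the constant term of the reversal of $G$, i.e.\ the leading coefficient of $G$, which is a unit by hypothesis, so the iteration goes through verbatim. The multiplication step is otherwise a clean reduction to the fast integer multiplication of Fact~\ref{fact:int_arithmetic}, with the coefficient-growth bound $Dm^2$ being the only quantity that needs attention.
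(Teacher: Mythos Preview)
The paper does not actually prove this statement: it is recorded as a \emph{Fact} with a bare citation to \cite{HarvdHLec14b}, with no accompanying argument. So there is no paper proof to compare against; you have supplied a proof where the authors simply invoke the literature.

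Your argument is correct. The Kronecker reduction for multiplication is the standard route, and your observation that the stray $\log D$ term is absorbed by the $\polylog$ slack in $\softoh(D\log m)$ is valid (since $\log D\in\polylog(D\log m)$ and $\log m\ge 1$). Your treatment of division via Newton iteration is also fine, and you correctly isolate the one nontrivial hypothesis---that the leading coefficient of $G$ be a unit in $\ZZ_m$---which is both necessary for $F\rem G$ to be well defined and sufficient for the power-series inversion to proceed. The cited paper \cite{HarvdHLec14b} obtains its bound by direct FFT-style transforms over finite fields rather than by packing into a single integer, which matters for the hidden constants and log factors but not at the $\softoh$ level you are working at; you already acknowledge this.
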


In particular, dense arithmetic operations in $\ZZ_m[x]/\la x^p-1 \ra$ may be computed 
in $\softoh( p\log m)$ bit operations.
}

Assume $F\in\ZZ[x]$ as in \eqref{eqn:sparserep} with bounds
$D$, $S$, and $C$ as described above.
Our algorithm performs arithmetic on modular images of $F$.  For $F \in \ZZ_m[x]$, 
we represent $F(x) \bmod (x^p-1) \in \ZZ_m[x]/\la x^p-1 \ra$ by the remainder from dividing $\ZZ_m(x)$ by 
$(x^p-1)$.
\iftoggle{short}{}{Namely $F(x)\bmod(x^p-1)$ is a polynomial with degree less
than $p$. }%
Note we treat $F(x) \rem (x^p-1)$ and $F(x) \bmod 
(x^p-1)$ as elements of $\ring[x]$ and $\ring[x]/\la 
x^p-1\ra$ respectively.  To reduce a sparse polynomial $F 
\bmod (x^p-1)$, we reduce each exponent modulo $p$, and then add like-degree
terms.  By Fact \ref{fact:int_arithmetic}, we have: 
\begin{cor}\label{cor:modxp_m}
	Given any $F \in \ZZ_m[x]$, we can compute $F \bmod (x^p-1)$ using 
	$\softoh( S(\log D + \log m))$ bit operations.
\end{cor}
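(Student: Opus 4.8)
The plan is to follow exactly the recipe sketched just before the statement: reduce each exponent modulo $p$, then merge terms that end up with the same degree. Write $F = \sum_{i=1}^{S} c_i x^{e_i}$ with $c_i \in \ZZ_m$ and $0 \le e_i < D$. First I would dispose of the trivial case $p \ge D$: since $\deg F < D \le p = \deg(x^p-1)$, the remainder of $F$ on division by $x^p-1$ is $F$ itself, so the algorithm just returns a copy of $F$, costing $\softoh(S(\log D + \log m))$ to read and write the representation. So from now on assume $p < D$.

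Now for the main case. For each $i$ replace $c_i x^{e_i}$ by $c_i x^{e_i'}$ with $e_i' = e_i \brem p$, justified by $x^{e_i}\equiv x^{e_i \bmod p}\pmod{x^p-1}$; since $e_i < D$ and $p < D$, Fact~\ref{fact:int_arithmetic} computes each $e_i'$ in $\softoh(\log D)$ bit operations, for a total of $\softoh(S\log D)$. This leaves $S$ pairs $(c_i,e_i')$ whose exponents are $\bigoh(\log D)$-bit integers. Sort the pairs by $e_i'$ with any comparison sort: $\bigoh(S\log S)$ comparisons, each $\softoh(\log D)$ bit operations, hence $\softoh(S\log D)$ overall (the extra $\log S$ factor is swallowed by the $\softoh$). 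A single left-to-right scan then accumulates, for each distinct value of $e_i'$, the sum of the associated coefficients in $\ZZ_m$; there are at most $S$ such additions, each $\softoh(\log m)$ by Fact~\ref{fact:int_arithmetic}, hence $\softoh(S\log m)$. Finally drop any term whose accumulated coefficient is $0$. What remains is $F\bmod(x^p-1)$ in the standard sparse form, and summing the three bounds gives $\softoh(S(\log D + \log m))$, as claimed.

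I do not anticipate a real obstacle: this is routine bookkeeping. The only points needing slight care are that the term count $S$ and the exponent magnitude $\log D$ should be treated as a priori independent quantities — which is exactly why I short-circuit when $p \ge D$ and use a comparison sort whose spurious $\log S$ factor disappears into the soft-oh — and ensuring the output is returned with distinct, already-reduced exponents and no zero coefficients.
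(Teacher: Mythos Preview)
Your proposal is correct and follows exactly the approach the paper sketches in the sentence preceding the corollary: reduce each exponent modulo $p$ and add like-degree terms, with the cost bound coming from Fact~\ref{fact:int_arithmetic}. The paper gives no further proof, so your write-up simply supplies the routine details (the $p\ge D$ short-circuit, the sort, and the scan) that the paper leaves implicit.
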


\section{Number-theoretic subroutines}\label{sec:redexp}

\subsection{Choosing primes}

We first recall how to choose a random prime number.

\begin{fact}[Corollary 3, \cite{RosSch62}]\label{fact:rosser}
If $\lambda \ge 21$, then
the number of primes in $(\lambda, 2\lambda]$  is at least $3\lambda/(5\ln \lambda)$.
\end{fact}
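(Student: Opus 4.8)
The plan is to derive this directly from explicit Chebyshev-type estimates on $\pi(\lambda)$, the prime-counting function, which are exactly the content of \cite{RosSch62}. First I would recall the standard explicit bounds: for all $\lambda \ge 17$, one has $\pi(\lambda) > \lambda/\ln\lambda$, and for all $\lambda > 1$, one has $\pi(\lambda) < 1.25506\,\lambda/\ln\lambda$ (these are Rosser--Schoenfeld's inequalities (3.5) and (3.6)). The number of primes in the interval $(\lambda,2\lambda]$ is then $\pi(2\lambda)-\pi(\lambda)$, and the strategy is to bound this below by combining the lower bound on $\pi(2\lambda)$ with the upper bound on $\pi(\lambda)$:
\[
\pi(2\lambda)-\pi(\lambda) \;>\; \frac{2\lambda}{\ln(2\lambda)} \;-\; \frac{1.25506\,\lambda}{\ln\lambda}.
\]

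Next I would factor out $\lambda/\ln\lambda$ and analyze the resulting expression
\[
\frac{\lambda}{\ln\lambda}\left(\frac{2\ln\lambda}{\ln\lambda + \ln 2} - 1.25506\right),
\]
showing that for $\lambda \ge 21$ the parenthetical factor is at least $3/5$. Since $\frac{2\ln\lambda}{\ln\lambda+\ln 2} = 2 - \frac{2\ln 2}{\ln\lambda + \ln 2}$ is increasing in $\lambda$, it suffices to check the inequality $2 - \frac{2\ln 2}{\ln\lambda+\ln 2} - 1.25506 \ge 3/5$, i.e. $\frac{2\ln 2}{\ln\lambda+\ln 2} \le 0.14494$, at the endpoint $\lambda = 21$; there $\ln 21 + \ln 2 = \ln 42 \approx 3.738$, and $2\ln 2 \approx 1.386$, giving ratio $\approx 0.371$ — which is \emph{not} below $0.145$. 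So the crude two-sided Chebyshev bound is too weak near $\lambda = 21$, and this is where the real work lies.

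The main obstacle, then, is that one cannot get the constant $3/5$ this way for small $\lambda$; the bound must instead come from the sharper estimates in \cite{RosSch62} on $\theta(\lambda) = \sum_{p\le\lambda}\ln p$ (their inequalities on $|\theta(\lambda)-\lambda|$, valid for $\lambda$ past an explicit threshold) together with partial summation to convert back to $\pi$, or else by an explicit finite check for $21 \le \lambda \le \lambda_0$ against a table of primes and an asymptotic argument for $\lambda > \lambda_0$. Concretely, I would (i) use the Rosser--Schoenfeld bound $\theta(2\lambda) - \theta(\lambda) > c\lambda$ for a suitable explicit $c$ valid for $\lambda$ beyond some bound $\lambda_0$, and note $\pi(2\lambda)-\pi(\lambda) \ge (\theta(2\lambda)-\theta(\lambda))/\ln(2\lambda) > c\lambda/\ln(2\lambda)$, then verify $c\lambda/\ln(2\lambda) \ge 3\lambda/(5\ln\lambda)$ reduces to $\ln(2\lambda)/\ln\lambda \le 5c/3$, which holds for large $\lambda$; and (ii) handle the finitely many remaining $\lambda \in [21,\lambda_0]$ by direct enumeration of primes, checking $\pi(2\lambda)-\pi(\lambda) \ge \lceil 3\lambda/(5\ln\lambda)\rceil$ case by case. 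Since this is precisely Corollary~3 of \cite{RosSch62}, in the paper I would simply cite it; the sketch above is how one would reconstruct it from the ambient Chebyshev/$\theta$-function machinery if needed.
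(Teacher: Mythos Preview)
The paper gives no proof of this statement at all: it is stated as a \textbf{Fact} and attributed directly to Corollary~3 of \cite{RosSch62}. Your proposal ultimately arrives at the same conclusion---that one should simply cite Rosser--Schoenfeld---so in that sense it matches the paper exactly.

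Your extended sketch of how one would reconstruct the bound is extra material the paper does not attempt. The sketch is honest and essentially sound: you correctly observe that the crude two-sided $\pi$ bounds are too weak near $\lambda=21$, and your proposed remedy (sharper $\theta$-function estimates for large $\lambda$ combined with a finite check for small $\lambda$) is indeed the standard route and is how Rosser and Schoenfeld themselves proceed. For the purposes of this paper, though, none of that is needed; the citation suffices.
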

We test if $p$ is prime in $\bigoh( \polylog(p))$ time via the method in \cite{AgrKaySax04}. 
This test and Fact \ref{fact:rosser} lead to procedure \ref{proc:getprime}.

\begin{procedure}[Htbp]
  \caption{GetPrime($\lambda, \mu$)}\label{proc:getprime}
  \KwIn{$\lambda \geq 21$; $\mu \in (0,1)$.}
  \KwOut{Integer $p\in(\lambda,2\lambda]$,
  s.t.\ $\Pr[p\text{ not prime}] < \mu$.}
  \RepTimes{$m = \lceil (5/6) \ln \lambda \ln(1/\mu) \rceil$}{
    $p \gets$ random odd integer from $(\lambda,2\lambda] \cap \ZZ$ \;
    \lIf{$p$ is prime}{\Return $p$}
  }
  \Return \fail\;
\end{procedure}

\begin{lem}
  \label{lem:getprime_cost}
  \ref{proc:getprime}($\lambda, \mu$) works as stated and
  has bit complexity 
  $\softohmu(\polylog(\lambda)).$
\end{lem}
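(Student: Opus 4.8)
The plan is to verify the two claims in the lemma separately: correctness (the output probability bound) and the cost bound. For correctness, observe that \ref{proc:getprime} succeeds exactly when at least one of the $m$ independent trials produces a prime; it returns \fail{} only when all $m$ trials fail. So I would first lower-bound the probability that a single trial — picking a random odd integer in $(\lambda,2\lambda]$ — yields a prime. By Fact~\ref{fact:rosser}, there are at least $3\lambda/(5\ln\lambda)$ primes in $(\lambda,2\lambda]$, and since $2$ is not in this interval (as $\lambda\ge 21$), all of these primes are odd. The number of odd integers in $(\lambda,2\lambda]$ is at most $\lambda/2 + 1 \le \lambda$ (for $\lambda \ge 2$), so each trial succeeds with probability at least $(3\lambda/(5\ln\lambda))/\lambda = 3/(5\ln\lambda) = 6/(10\ln\lambda)$. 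Hence the probability that a single trial fails is at most $1 - 6/(10\ln\lambda)$.

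Next I would bound the failure probability over all $m$ trials. Since the trials are independent, $\Pr[\text{all fail}] \le \bigl(1 - \tfrac{6}{10\ln\lambda}\bigr)^m \le \exp\!\bigl(-\tfrac{6m}{10\ln\lambda}\bigr)$, using $1-x\le e^{-x}$. Plugging in $m = \lceil (5/6)\ln\lambda\,\ln(1/\mu)\rceil \ge (5/6)\ln\lambda\,\ln(1/\mu)$ gives an exponent of at most $-\tfrac{6}{10\ln\lambda}\cdot\tfrac{5}{6}\ln\lambda\,\ln(1/\mu) = -\tfrac12\ln(1/\mu)$, so $\Pr[\fail] \le \mu^{1/2}$. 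This is not quite $<\mu$, so I would instead take $m = \lceil (5/3)\ln\lambda\,\ln(1/\mu)\rceil$ — i.e.\ double the constant in the algorithm — which yields $\Pr[\fail]\le \mu$ (strict inequality holds since $m$ is rounded up); alternatively one can absorb the constant into the $\softohmu$ bound, but it is cleanest to just note the constant in \ref{proc:getprime} should be chosen to make the exponent $\le -\ln(1/\mu)$. The key point in any case is that $\bigoh(\log\lambda\cdot\log(1/\mu))$ trials suffice, since each trial independently succeeds with probability $\Omega(1/\log\lambda)$. Note also that when the procedure does return a value $p$, it only does so after an explicit primality check, so $p$ is certainly prime; the only error is returning \fail{}, which happens with probability at most $\mu$.

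For the cost bound: the procedure runs at most $m = \bigoh(\log\lambda\cdot\log(1/\mu))$ iterations. Each iteration draws a random odd integer in $(\lambda,2\lambda]$, which has $\bigoh(\log\lambda)$ bits and costs $\bigoh(\log\lambda)$ bit operations to sample, and then runs the AKS primality test of \cite{AgrKaySax04} at cost $\bigoh(\polylog(p)) = \bigoh(\polylog(\lambda))$ since $p\le 2\lambda$. Thus the total cost is $\bigoh(\log\lambda\cdot\log(1/\mu)\cdot\polylog(\lambda)) = \bigoh(\polylog(\lambda)\cdot\log(1/\mu))$, which is $\softohmu(\polylog(\lambda))$ by definition of that notation (the $\log^c(1/\mu)$ factor there covers the $\log(1/\mu)$ here, and the $\polylog(\lambda)$ factor is exactly $\polylog(\lambda)$). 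The only mild subtlety is making sure the definition of $\softohmu$ is applied with $\phi = \polylog(\lambda)$ so that $\polylog(\phi/\mu)$ indeed dominates the $\log(1/\mu)$ and extra $\polylog$ factors; this is routine. I do not expect any real obstacle here — the lemma is essentially a direct combination of Fact~\ref{fact:rosser}, the AKS cost bound, and a standard independent-trials amplification argument; the only thing to be careful about is the constant in the repeat count, which must be large enough to turn the $\mu^{1/2}$-type bound into a clean $\mu$ bound.
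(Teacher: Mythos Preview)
Your approach is essentially identical to the paper's: bound the single-trial success probability via Fact~\ref{fact:rosser}, then amplify over $m$ independent trials using $1-x\le e^{-x}$, and cite AKS for the per-iteration cost. The only discrepancy is your count of odd integers in $(\lambda,2\lambda]$. You bound it by $\lambda/2+1\le\lambda$ and therefore get a per-trial success probability of $3/(5\ln\lambda)$, which forces you to conclude the constant $5/6$ in the repeat count is too small. In fact the count is at most $\lfloor\lambda/2\rfloor\le\lambda/2$ (an interval of $\lambda$ consecutive integers contains at most $\lambda/2$ odd ones), so the per-trial success probability is at least $6/(5\ln\lambda)$, exactly twice your estimate. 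With that, the algorithm's constant $5/6$ already gives $\exp(-6m/(5\ln\lambda))\le\exp(-\ln(1/\mu))=\mu$; no adjustment is needed. Your cost analysis is fine and matches the paper.
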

\begin{proof}
  The stated cost follows from fast primality testing due to \cite{AgrKaySax04}.
  The probability that any chosen $p$ is prime is at least
  $6/(5\ln\lambda)$, from Fact~\ref{fact:rosser}. 
  Therefore, using the fact that $(1-x)<\exp(-x)$ for any nonzero $x\in\RR$,
  the probability that none of the chosen $p$ are prime is at most
  $\left(1 - \tfrac{6}{5\ln\lambda}\right)^m 
    < \exp\left(\tfrac{-6m}{5\ln\lambda}\right) \le
    \mu,$ as desired.
  \end{proof}

It is frequently useful to choose a random prime that divides very few
of the integers in some unknown set $\S \subset \ZZ$. 
If a fraction of $\gamma$ integers in $\S$ do not vanish modulo 
$p$, then we call $p$ a \emph{$\gamma$-\vanp{}} for $\S$.  We 
call a $1$-\vanp{} for $\S$ a {\em good \vanp{}}.
Procedure \ref{proc:vanp} shows how to choose a random
$\gamma$-\vanp{}.

\begin{procedure}[tb]
  \caption{GetVanishPrime($S,D,\gamma,\mu$)\label{proc:vanp}}
  \KwIn{Integers $S,D\in\ZZ_{>0}$; $\gamma\in(0,1]$; $\mu\in(0,1)$.}
  \KwOut{Integer $p$, s.t.\ for
    any set $\S$ satisfying $\#\S\le S$ and $\norm{\S}_\infty<D$,
    with probability at least $1-\mu$, $p$ is a $\gamma$-\vanp{} for $\S$.}
  $\lambda \gets \displaystyle \max\left(21, 
    \tfrac{10}{3\mu}\min\left(S,\tfrac{1}{1-\gamma}\right)\ln D
    \right)$ \;
  \Return \ref{proc:getprime}($\lambda$, $\mu/2$)
\end{procedure}

\begin{lem}\label{lem:vanp}
  Procedure \ref{proc:vanp} works as stated to produce a
  $\gamma$-\vanp{} $p$ satisfying
  $$p\in\bigoh\left(\tfrac{1}{\mu}\min\left(S,\tfrac{1}{1-\gamma}\right) \log D
  \right)$$
  and has bit complexity $\polylog(p)$.
\end{lem}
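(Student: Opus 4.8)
The plan is to show that the choice of $\lambda$ in Procedure~\ref{proc:vanp} is large enough that a uniformly random prime in $(\lambda,2\lambda]$ fails to be a $\gamma$-\vanp{} with probability at most $\mu/2$, and then to fold in the $\mu/2$ failure probability from \ref{proc:getprime}($\lambda,\mu/2$) to get total failure probability at most $\mu$. First I would fix an arbitrary set $\S$ with $\#\S\le S$ and $\norm{\S}_\infty<D$. The key quantity is the number of ``bad'' primes $p$ for $\S$, i.e.\ primes such that strictly more than a $(1-\gamma)$-fraction of the elements of $\S$ vanish mod $p$. I would bound this in two complementary ways. On one hand, any prime dividing a particular nonzero $s\in\S$ satisfies $p\le |s| < D$, and $s$ has at most $\log_2 D$ distinct prime factors; summing over $\S$, the total number of (prime, element) incidences with $p\mid s$ is at most $S\log_2 D$. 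If $p$ is bad for $\S$, then $p$ divides more than $(1-\gamma)\cdot\#\S'$ elements, where $\S'$ is the set of nonzero elements — but to be careful I should treat a zero element as divisible by every prime, so really I want to phrase ``$\gamma$-\vanp'' in terms of the non-vanishing fraction and note that if there are many zeros the statement can only get easier. Assuming WLOG all elements of $\S$ are nonzero (zeros only help), each bad prime accounts for more than $(1-\gamma)S$ incidences if we use the crude count $\#\S\le S$, OR we can say each bad prime with ``more than a $(1-\gamma)$ fraction vanishing'' forces at least one vanishing whenever $1-\gamma>0$; combining, the number of bad primes $B$ satisfies $B\cdot\max(1, (1-\gamma)\cdot 1)$-type bound. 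The clean way: the number of bad primes is at most $\min\!\big(S,\tfrac{1}{1-\gamma}\big)\log_2 D$ — the $S$ bound because each bad prime divides at least one element and there are at most $S$ elements each with $\le\log_2 D$ prime factors (so $\le S\log_2 D$ bad primes, but actually we can do better: a bad prime divides $>(1-\gamma)S$ elements, and total incidences $\le S\log_2 D$, giving $\le \tfrac{1}{1-\gamma}\log_2 D$ bad primes). So $B\le \min(S,\tfrac{1}{1-\gamma})\log_2 D$; I will just use $\ln D$ versus $\log_2 D$ carefully, as the constant $10/3$ versus the $3/(5\ln\lambda)$ prime-counting bound is exactly tuned for this.

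Next I would apply Fact~\ref{fact:rosser}: since $\lambda\ge 21$, the number of primes in $(\lambda,2\lambda]$ is at least $3\lambda/(5\ln\lambda)$. A random odd integer in $(\lambda,2\lambda]$ that happens to be prime is uniform among those primes, so conditioned on \ref{proc:getprime} returning a prime, the probability it is bad is at most
$$\frac{B}{3\lambda/(5\ln\lambda)} \le \frac{\min(S,\tfrac{1}{1-\gamma})\ln D \cdot (5\ln\lambda)}{3\lambda}.$$
Plugging in $\lambda\ge \tfrac{10}{3\mu}\min(S,\tfrac{1}{1-\gamma})\ln D$ (the second term in the $\max$) makes the $\lambda$ in the denominator cancel the numerator up to the factor $\tfrac{5\ln\lambda}{3}\cdot\tfrac{3\mu}{10} = \tfrac{\mu\ln\lambda}{2}$ — wait, that has a stray $\ln\lambda$. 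I should double check: the intended bound is that this conditional-bad probability is $\le\mu/2$, and indeed $\tfrac{B\cdot 5\ln\lambda}{3\lambda}\le \tfrac{\mu}{2}$ iff $\lambda/\ln\lambda \ge \tfrac{10B}{3\mu}$; since $\lambda\ge \tfrac{10}{3\mu}B$ and $\lambda/\ln\lambda$... hmm, $\lambda\ge\lambda/\ln\lambda$ is false. So the paper must actually be using that $B\le \min(S,\tfrac1{1-\gamma})\log_2 D$ with the $\log_2$, and $\ln\lambda$ gets absorbed because... Actually I suspect the correct reading is that each nonzero $s<D$ has at most $\log_2 D < \ln D \cdot (\log_2 e)$... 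I will need to recheck the exact constants; the honest statement of this step is: \emph{``a short calculation using Fact~\ref{fact:rosser} and the bound $B\le\min(S,\tfrac1{1-\gamma})\log_2 D$ shows the conditional failure probability is at most $\mu/2$.''} The main obstacle is getting these constants to line up exactly — in particular whether the bound on the number of distinct prime divisors of $s<D$ should be $\log_2 D$ or $\ln D$, and where the $\ln\lambda$ from Fact~\ref{fact:rosser} is absorbed (likely into a use of $\ln\lambda\le\lambda^{\text{const}}$ or by strengthening $\lambda$'s lower bound; I may have to insert an extra constant factor, which is harmless for the asymptotic statement).

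Finally I would combine the two failure sources by a union bound: \ref{proc:getprime}($\lambda,\mu/2$) returns \fail{} (i.e.\ no prime) with probability $<\mu/2$ by Lemma~\ref{lem:getprime_cost}, and conditioned on it returning a prime, that prime is not a $\gamma$-\vanp{} with probability $\le\mu/2$; hence the total failure probability is at most $\mu$, as claimed. For the size bound, $p\le 2\lambda$ and $\lambda=\max(21,\tfrac{10}{3\mu}\min(S,\tfrac1{1-\gamma})\ln D)$, so $p\in\bigoh\!\big(\tfrac1\mu\min(S,\tfrac1{1-\gamma})\log D\big)$ (the $21$ is absorbed since $\log D\ge 1$). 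For the bit complexity, the cost is dominated by the call to \ref{proc:getprime}($\lambda,\mu/2$), which by Lemma~\ref{lem:getprime_cost} is $\softohmu(\polylog\lambda)=\polylog(p)$ — note the $\tfrac1\mu$ factor inside $\lambda$ means $\polylog\lambda$ already includes $\polylog\tfrac1\mu$, so writing it as $\polylog(p)$ is consistent with the lemma statement; plus the $\bigoh(1)$ arithmetic to compute $\lambda$ itself, which is $\polylog(S,D,\tfrac1\mu)$ and thus absorbed.
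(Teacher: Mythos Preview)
Your overall structure is right, but there is a real gap exactly where you flag the ``stray $\ln\lambda$''. Your bound on the number of bad primes, $B\le\min\!\big(S,\tfrac{1}{1-\gamma}\big)\log_2 D$, counts \emph{all} bad primes, using only that each nonzero $s$ with $|s|<D$ has at most $\log_2 D$ prime divisors. When you divide this $B$ by the prime count $\ge 3\lambda/(5\ln\lambda)$ from Fact~\ref{fact:rosser}, the $\ln\lambda$ does not cancel, and no ``extra constant factor'' will fix it: $\ln\lambda$ grows with $S$, $D$, and $1/\mu$, so absorbing it would genuinely weaken the stated $\bigO$ bound on $p$.

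The missing idea is to count only bad primes in $(\lambda,2\lambda]$. Each such prime $p$ satisfies $p\ge\lambda$, and if $p$ is bad it divides more than $(1-\gamma)S$ elements of $\S$ (take $\#\S=S$ WLOG). Hence if there are $k$ bad primes in this range, then $\lambda^{(1-\gamma)Sk}\le\prod_{a\in\S}|a|<D^{S}$, giving $k<\dfrac{\ln D}{(1-\gamma)\ln\lambda}$. The alternative bound $k<\dfrac{S\ln D}{\ln\lambda}$ follows similarly by counting primes for which \emph{any} element vanishes. Now the $\ln\lambda$ in the denominator of $k$ cancels exactly against the $\ln\lambda$ from Fact~\ref{fact:rosser}, and plugging in $\lambda\ge\tfrac{10}{3\mu}\min(S,\tfrac{1}{1-\gamma})\ln D$ yields the conditional bad-prime probability $\le\mu/2$ with the constants matching on the nose. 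Your union-bound wrap-up, size bound, and complexity argument are fine once this step is repaired.
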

\begin{proof}
  Let $\S$ be any subset of integers with $\#\S\le S$ and
  $\norm{S}_\infty < D$.
  Write $M = \prod_{a\in\S} |a| < D^S$, and write $k$ for the number of
  ``bad primes'' for which more than $(1-\gamma)S$ elements of $\S$
  vanish modulo $p$. Since each $p\ge\lambda$, this means that
  $\lambda^{(1-\gamma)Sk} \leq M$, and because $M<D^S$,
  $k < \ln D / ((1-\gamma)\ln \lambda)$ is an upper bound on the number
  of bad primes.

  If $1-\gamma$ is very small, we
  instead use a similar argument to say that the number of primes for
  which \emph{any} element of $\S$ vanishes is at most
  $k < S\ln D / \ln \lambda$. 

  Then Fact~\ref{fact:rosser} guarantees the prevalence of bad
  primes among all primes in $(\lambda,2\lambda)$ is at most $\mu/2$,
  so the probability of getting a bad prime, or of erroneously returning
  a composite $p$, is bounded by $\mu$.
\end{proof}

\subsection{Avoiding collisions}

A closely related problem is to choose $p$ so that most
integers in a set $\S$ are unique modulo $p$. We say that
$a\in\S$ \emph{collides} modulo $p$ if there exists $b\in\S$ with
$a\equiv b \pmod{p}$. We say $p$ is a \emph{$\gamma$-\diffp{} for
$\S$} if the fraction of integers in $\S$ which do not collide modulo $p$ is at
least $\gamma$. A 1-\diffp{} is called a \emph{good \diffp{} for \S}.
Procedure \ref{proc:diffp} shows how to compute \diffp{}s,
conditioned on the \emph{diameter} of the unknown set $\S$:
$$
\width(\S)\eqdef\max(\S)-\min(\S).
$$\vspace{-2em}
\begin{lem}\label{lem:diffp}
  Procedure \ref{proc:diffp} has bit complexity $\polylog(p)$ and works as stated to produce a
  $\gamma$-\diffp{} $p$ satisfying $p\in\bigoh(D)$ and
  $$p\in\bigoh\left(\tfrac{1}{\mu}S \min\left(S,\tfrac{1}{1-\gamma}\right) \log D
  \right).$$
\end{lem}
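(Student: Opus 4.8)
The plan is to reduce the construction of a $\gamma$-\diffp{} to that of a vanish-prime, using Lemma~\ref{lem:vanp} as a black box together with the elementary observation that $a\in\S$ collides modulo $p$ exactly when $p$ divides some pairwise difference $a-b$ with $b\in\S$. So I would first pass to the \emph{difference set} $\S' = \{\, a-b : a,b\in\S,\ a\neq b\,\}$, noting that $\#\S'\le\binom S2<S^2/2$ and that every element of $\S'$ has absolute value at most $\width(\S)<D$. Thus $\S'$ satisfies the hypotheses of Lemma~\ref{lem:vanp} with size bound $\binom S2$ and value bound $D$ — even though Procedure~\ref{proc:diffp} never needs to form $\S'$ explicitly, since only the bounds $S$ and $D$ enter the choice of $\lambda$, exactly as in \ref{proc:vanp}.

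Next I would make precise the counting that converts ``few vanishing differences'' into ``few colliding elements''. Partitioning $\S$ into residue classes modulo $p$, a class of size $m\ge 2$ contributes $m$ colliding elements and $\binom m2\ge m/2$ vanishing unordered pairs, so the number of colliding elements is at most twice the number of vanishing differences. Hence, if $p$ is a $\gamma'$-\vanp{} for $\S'$ with $\tfrac{1}{1-\gamma'}=\tfrac{S}{1-\gamma}$ (equivalently, $p$ fails to divide at least a $\bigl(1-\tfrac{1-\gamma}{S}\bigr)$-fraction of $\S'$), then fewer than $\tfrac{1-\gamma}{S}\binom S2<\tfrac{(1-\gamma)S}{2}$ differences vanish, so fewer than $(1-\gamma)S$ elements collide and $p$ is a $\gamma$-\diffp{}. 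Feeding $\binom S2$, $D$, $\gamma'$, $\mu$ into Lemma~\ref{lem:vanp} then gives $p\in\bigoh\bigl(\tfrac1\mu\min(\binom S2,\tfrac{S}{1-\gamma})\log D\bigr)$, and since $\binom S2<S^2/2$ this is $\bigoh\bigl(\tfrac1\mu S\min(S,\tfrac1{1-\gamma})\log D\bigr)$, with success probability $1-\mu$ and bit cost $\polylog(p)$ inherited directly from \ref{proc:getprime}.

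It remains to obtain the separate bound $p\in\bigoh(D)$, and this is the step that dictates the shape of Procedure~\ref{proc:diffp}: the bound above can far exceed $D$ when $S$ is large, so the procedure must cap the parameter $\lambda$ handed to \ref{proc:getprime} at $\bigoh(D)$. This cap costs nothing in correctness: as soon as $p>\width(\S)$, distinct elements of $\S$ stay distinct modulo $p$ (their difference is nonzero with absolute value below $p$), so any such $p$ is automatically a \emph{good} \diffp{} deterministically. One then argues by cases on whether the uncapped value $v=\Theta\bigl(\tfrac1\mu\min(\binom S2,\tfrac{S}{1-\gamma})\ln D\bigr)$ exceeds $\Theta(D)$: if $v\gtrsim D$ the capped $\lambda$ is $\Theta(D)$ and $p$ is correct with probability $1$; if $v\lesssim D$ the analysis of the previous paragraph applies and already yields $p\in\bigoh(D)$. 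Either way both claimed bounds on $p$ hold at once, and the cost remains that of a single call to \ref{proc:getprime}.

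The main obstacle I anticipate is not any one inequality but keeping the bookkeeping aligned so that the two regimes of Lemma~\ref{lem:vanp} (the case $\min(\binom S2,\tfrac1{1-\gamma'})=\tfrac1{1-\gamma'}$ when $\gamma$ is near $1$, versus $\min=\binom S2$ otherwise) together with the deterministic $p>\width(\S)$ fallback produce exactly the stated $\bigoh(D)$ and $\bigoh\bigl(\tfrac1\mu S\min(S,\tfrac1{1-\gamma})\log D\bigr)$ bounds, rather than these with a spurious extra factor of $S$ or of $\tfrac1{1-\gamma}$. The factor-of-two slack in the collision-versus-difference count, and the estimate $\binom S2<S^2/2$, are precisely what prevent the product $S\cdot\min(S,\tfrac1{1-\gamma})$ from degrading.
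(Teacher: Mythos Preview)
Your reduction to Lemma~\ref{lem:vanp} via the difference set has a genuine gap. You correctly argue that a residue class of size $m\ge 2$ contributes $m$ colliding elements and $\binom{m}{2}\ge m/2$ vanishing \emph{pairs}, so (colliding elements) $\le 2\cdot$(vanishing pairs). But then you apply the vanish-prime lemma to the \emph{set} $\S'=\{a-b:a\ne b\}$, which bounds the number of vanishing \emph{distinct difference values}, not vanishing pairs. These can differ dramatically: take $\S=\{0,1,\ldots,k-1\}\cup\{p,p+1,\ldots,p+k-1\}$ with $S=2k$. Modulo $p$ there are $k$ classes of size $2$, so all $S$ elements collide, yet the only difference value divisible by $p$ is $p$ itself, so exactly one element of $\S'$ vanishes. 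Thus a prime can be a $(1-\tfrac{1}{\#\S'})$-vanish-prime for $\S'$ while being a $0$-difference-prime for $\S$. Your inequality chain breaks at exactly the point where you pass from ``few elements of $\S'$ vanish'' to ``few pairs vanish.''

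The paper sidesteps this entirely by a different reduction: for each $a\in\S$ it forms the product $P_a=\prod_{b\ne a}(a-b)$, so that $a$ collides modulo $p$ iff $p\mid P_a$. This gives a set of exactly $S$ integers (one per element of $\S$), each bounded by $\width(\S)^{S-1}<D^{S-1}$, and now Lemma~\ref{lem:vanp} applies directly with the \emph{same} $\gamma$: a $\gamma$-vanish-prime for $\{P_a\}$ is precisely a $\gamma$-difference-prime for $\S$. Substituting $D^{S-1}$ for $D$ in Lemma~\ref{lem:vanp} yields $p\in\bigoh\bigl(\tfrac{1}{\mu}\min(S,\tfrac{1}{1-\gamma})(S-1)\log D\bigr)$, which is exactly the stated bound and exactly the $\lambda$ that Procedure~\ref{proc:diffp} computes. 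Your approach is repairable---work with the \emph{multiset} of $\binom{S}{2}$ pair-differences and note that the proof of Lemma~\ref{lem:vanp} goes through verbatim for multisets---but even then the resulting $\lambda$ differs from the procedure's by small constant factors, so you would still owe an extra line reconciling your analysis with the procedure as written. The product-per-element trick is both shorter and an exact match.
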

\begin{proof}
  Let $\S$ be any set as described. An element $a\in\S$ collides modulo
  $p$ iff the product of differences 
  $\prod_{b\in\S, b\ne a} (a-b)$ vanishes modulo $p$. If $p>D$
  this can never happen. Otherwise, as each such
  product is at most $\width(\S)^{S-1} < D^{S-1}$, the result follows
  from the Lemma~\ref{lem:vanp}, setting the $D$ of the lemma to
  $D^{S-1}$.
\end{proof}

\begin{procedure}[tb]
  \caption{GetDiffPrime($S,D,\gamma,\mu$)\label{proc:diffp}}
  \KwIn{Integers $S,D\in\ZZ_{>0}$; $\gamma\in(0,1]$; $\mu\in(0,1)$.}
  \KwOut{Integer $p$, s.t.\ for
    any set $\S$ satisfying $\#\S\le S$ and $\width(\S)<D$,
    with probability at least $1-\mu$, $p$ is a 
    $\gamma$-\diffp{} for $\S$.}
  $\lambda \gets \displaystyle
    \tfrac{10}{3\mu}(S-1) \min\left(S,\tfrac{1}{1-\gamma}\right)\ln D$
    \;
  \lIf{$\lambda < 21$}{\Return \ref{proc:getprime}$(21,\mu/2)$}
  \lElseIf{$\lambda > D$}{\Return \ref{proc:getprime}$(D,\mu/2)$}
  \lElse{\Return \ref{proc:getprime}$(\lambda,\mu/2)$}
\end{procedure}

Our algorithms often perform arithmetic modulo $(x^p-1)$.
Similar to the notion of collisions above for a set of integers modulo
$p$, we say two distinct terms $cx^e$ and $c'x^{e'}$ of $F\in\R[x]$
\emph{collide} modulo $(x^p-1)$ if $e \equiv e' \pmod{p}$.

\iftoggle{short}{}{
\begin{ex}
Let $F=x + x^{6}+3x^7$.  Then $F \bmod (x^5-1)=2x + 3x^2$.  The term $2x$ is the image of $x + x^6$.  We say $x$ and $x^6$ collide modulo $(x^5-1)$, whereas $3x^7$ uniquely maps to $3x^2$.
\end{ex}
}

Essentially, reduction modulo $(x^p-1)$ ``hashes'' exponent 
$e\in\supp(F)$ to $e \brem p$. If $p$ is a good \diffp{}
for $\supp(F)$ and $q$ is a good \vanp{} for the coefficients of
$F$, then $F \rem (x^p-1)$ with coefficients reduced modulo $q$ has 
the same sparsity as $F$ itself.

\subsection{Primes in arithmetic progressions}\label{ssec:pqw}

Sometimes we implicitly 
reduce exponents modulo $p$ by evaluating at
$p$th roots of unity. In such cases we need to construct primes $q$ such
that $p|(q-1)$, and to find $p$th roots of unity modulo each $q$. 

In principle, this procedure is no different than the previous
ones, as there is ample practical and theoretical evidence to suggest
that the prevalence of primes in arithmetic progressions without 
common divisors is roughly the
same as their prevalence over the integers in general.

However, the closest to Fact~\ref{fact:rosser} that we can get here is
as follows, which is a special case of Lemma~7 in \cite{Fou13}.

\begin{lem}
  \label{lem:arithprog}
  There exists an absolute constant $\lambda_0$ such that,
  for all $\lambda\ge\lambda_0$, and for all but at most
  $\lambda/\ln^2 \lambda$ primes $p$ in the range
  $(\lambda,2\lambda]$,
  there are at least $\lambda^{0.89}/\ln \lambda$ primes $q$
  in the range $(\lambda^{1.89},2\lambda^{1.89}]$
  such that $p|(q-1)$.
\end{lem}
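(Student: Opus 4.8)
The plan is to invoke Lemma~7 of \cite{Fou13} as a black box and show that the claimed statement is just a clean specialization of it. First I would recall that the cited lemma is a Bombieri--Vinogradov--type result: for a suitable exponent pair, the number of primes $q\le x$ in an arithmetic progression $q\equiv 1\pmod p$ is, for all but a small exceptional set of moduli $p$ in a dyadic range, close to the expected count $\pi(x)/\varphi(p)\approx x/(p\ln x)$. The exponents $1.89$ and $0.89$ appearing in the statement are exactly the numerology one gets by choosing the ``level of distribution'' permitted by \cite{Fou13} and then picking $x=2\lambda^{1.89}$ and $p\sim\lambda$: the main term is then $\approx \lambda^{1.89}/(\lambda\ln(\lambda^{1.89})) = \Theta(\lambda^{0.89}/\ln\lambda)$, and the exceptional set has size $O(\lambda/\ln^2\lambda)$.

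The key steps, in order, would be: (1) State the general form of Lemma~7 from \cite{Fou13}, identifying its parameters (the modulus range, the range for $q$, the quality of the approximation, and the bound on the number of bad moduli). (2) Substitute the specific ranges $p\in(\lambda,2\lambda]$ and $q\in(\lambda^{1.89},2\lambda^{1.89}]$, checking that these fall within the admissible regime of the lemma (this is where the constant $\lambda_0$ comes from — it is whatever threshold makes all the inequalities in \cite{Fou13} valid, together with the requirement that the main term dominates the error term). (3) Convert the ``count of primes $q\le 2\lambda^{1.89}$ with $p\mid(q-1)$'' into a count of such $q$ in the half-open dyadic interval $(\lambda^{1.89},2\lambda^{1.89}]$, by subtracting the count up to $\lambda^{1.89}$; the difference of two asymptotics of the form $c\,x/(p\ln x)$ at $x$ and $x/2$ is still $\Theta(x/(p\ln x))$, which with $p\le 2\lambda$ gives at least $\lambda^{0.89}/\ln\lambda$ once $\lambda$ is large enough. (4) Collect the exceptional moduli from the application of the lemma into the single bound $\lambda/\ln^2\lambda$.

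The main obstacle is bookkeeping rather than mathematics: making sure the dyadic-interval version follows cleanly from whatever cumulative (i.e.\ ``up to $x$'') form \cite{Fou13} states, and that the constants absorbed into ``$\ge\lambda^{0.89}/\ln\lambda$'' and ``$\le\lambda/\ln^2\lambda$'' are genuinely nonnegative for $\lambda\ge\lambda_0$ — i.e.\ that the error term in the Bombieri--Vinogradov estimate is dominated by the main term in this range, so that a positive lower bound survives. Since the paper only needs \emph{existence} of the constant $\lambda_0$ and not an explicit value, I would not attempt to optimize or compute it; I would simply note that all the suppressed constants are effective in \cite{Fou13} and that choosing $\lambda_0$ large enough makes every inequality go through. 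The proof is therefore essentially a one-paragraph reduction: "This is immediate from Lemma~7 of \cite{Fou13} upon taking the modulus range $(\lambda,2\lambda]$ and the target range $(\lambda^{1.89},2\lambda^{1.89}]$, splitting the resulting count over a dyadic interval, and absorbing constants for $\lambda$ sufficiently large."
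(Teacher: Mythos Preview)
Your proposal is correct and follows the same approach as the paper: both are a direct specialization of Lemma~7 of \cite{Fou13}, taking the modulus range $R=\lambda$, $r=p$, and $x=\lambda^{1.89}$, with the level parameter $K=0.53$ chosen so that $(1.89)^{-1}<K<17/32$ and hence $x>\lambda^{1/K}$. Your step~(3) is superfluous, however: Fouvry's lemma already gives a lower bound $\alpha_K x/(\varphi(r)\ln x)$ for the number of such primes $q$ in the dyadic interval $(x,2x]$ itself (not a cumulative count up to $x$), so no subtraction is needed---the paper just checks that $\alpha_K \lambda^{1.89}/((p-1)\ln\lambda^{1.89})>\lambda^{0.89}/\ln\lambda$ once $\lambda_0\ge 3.78/\alpha_K$.
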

\begin{proof}
  Set $K=0.53$, which means $(1.89)^{-1} < K < \tfrac{17}{32}$.
  Fixing $s=1$, and for any $R\ge 2$, Lemma~7 in \cite{Fou13}
  guarantees the existence of positive constants $\alpha_K$
  and $x_K$ such that the following holds: 
  For all $x > \max(x_K,R^{1/K})$,
  and for all but $R/\ln^2 R$ integers $r \in (R,2R]$,
  there are at least $\alpha_K x / (\varphi(r) \ln x)$
  primes $q$ in the range $(x,2x]$ such that $r|(q-1)$,
  where $\varphi(r)$ is the Euler totient function.

  Setting $\lambda_0 = \max(x_K, 3.78/\alpha_K)$,
  and letting $R=\lambda$, $r=p$, and $x=\lambda^{1.89}$,
  the statement of our lemma holds because
  $$\varphi(r)\ln x = (p-1)\ln \lambda^{1.89}
    < 3.78 \lambda \ln \lambda,$$
thus \quad\quad
  $\displaystyle\frac{\alpha_K x}{\varphi(r)\ln x}
  = \frac{\alpha_K \lambda^{1.89}}{(p-1)\ln \lambda^{1.89}}
  > \frac{\lambda^{0.89}}{\ln \lambda}.$
\end{proof}

\begin{procedure}[btp]
  \caption{GetPrimRoots($D,T,C,\mu$)\label{proc:getpqw}}
  \KwIn{$D \ge \deg F$; $T \ge \#F$; $C\ge \norm{F}_\infty$;
    $\mu\in(0,1)$; where $F\in\ZZ[x]$ is fixed but unspecified.}
  \KwOut{Prime $p$, primes $(q_1,\ldots,q_k)$, and integers
    $(\omega_1,\ldots,\omega_k)$; or \fail{}.}
  $m \gets \lceil \lg \tfrac{2}{\mu} \rceil$ \;
  $\lambda \gets \max\left(786, \lambda_0, \tfrac{20}{3\mu}mT(T-1)\ln D,
    1.35 \ln^{3.13} (2C)\right)$ \;
  $a \gets \lceil 1.1\ln(2C) \ln^2 \lambda \rceil$ \;
  \RepTimes{$m$}{
    $p \gets $\ref{proc:getprime}$(\lambda,\tfrac{\mu}{4m})$ \;
    $\A \gets a$ distinct even integers in
      $[2,2\lambda^{0.89}]$ \label{getpqw:chooseA}\;
    $Q, W \gets$ empty lists \;
    \ForEach{$a \in \A$}{
      $q \gets ap+1$ \;
      $\zeta \gets$ random nonzero element of $\ZZ_q$ \;
      \If{$q$ is prime and $\zeta^a\bmod q \ne 1$}{
        Add $q$ to $Q$ and $\omega=\zeta^a$ to $W$ \;
        \lIf{$\prod_{q\in Q} q \ge 2C$}{\Return $p$, $Q$, and $W$\label{getpqw:checkret}}
      }
    }
  }
  \Return \fail \;
\end{procedure}

Lemma~\ref{lem:arithprog} forms the basis for Algorithm~\ref{proc:getpqw}, where
we assume that the constant $\lambda_0$ is given. Since this constant
has not actually been computed, a reasonable strategy would be to choose
some small ``guess'' for $\lambda_0$ and run the algorithm until it
does not report failure. If the algorithm fails, it could be due to the
random prime $p$ being an ``exception'' in Lemma~\ref{lem:arithprog}, or
due to unlucky guesses for the primitive roots $\zeta$, or due to the
guessed constant $\lambda_0$ being too small. 
Because our primality tests are deterministic, failure due to
$\lambda_0$ being too small is detectable by the algorithm returning
\fail{}.

We state the running time and correctness, assuming $\lambda_0$ is 
sufficiently large, as follows.

\begin{lem}\label{lem:getpqwcost}
  Procedure \ref{proc:getpqw} has worst-case bit complexity
  \iftoggle{short}{%
  $\softohmu\left(\log C \cdot \polylog\left(T + \log
  D\right)\right).$ 
  }{%
  $$\softohmu\left(\log C \cdot \polylog\left(T + \log
  D\right)\right).$$
  }%
  With probability at least $1-\mu$, it returns a good \diffp{} $p$
  for $F$, primes $q_1,\ldots,q_k$ such that $\prod_i q_i \geq 2C$, and $p$th primitive roots
  modulo each $q_i$, $\omega_1,\ldots,\omega_k$.
\end{lem}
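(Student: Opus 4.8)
The plan is to analyze Procedure~\ref{proc:getpqw} in two parts: first bound its running time, then bound its failure probability by a union bound over the independent sources of randomness (the prime $p$, the primitive-root guesses $\zeta$, and the repetition loop). The structure mirrors the earlier lemmas (\ref{lem:getprime_cost}, \ref{lem:vanp}), but with the twist that the prevalence guarantee for primes $q$ in the arithmetic progression $p\ZZ+1$ is the weaker Lemma~\ref{lem:arithprog} rather than Fact~\ref{fact:rosser}.

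\textbf{Running time.} First I would observe that $\lambda \in \bigoh\big((1/\mu)\, m\, T^2 \log D + \log^{3.13}(2C)\big)$ and $a\in\bigoh(\log C\cdot\log^2\lambda)$, with $m=\lceil\lg(2/\mu)\rceil\in\bigoh(\log(1/\mu))$. The outer loop runs $m$ times; inside, one call to \ref{proc:getprime}$(\lambda,\mu/(4m))$ costs $\softohmu(\polylog\lambda)$ by Lemma~\ref{lem:getprime_cost}; choosing $a$ even integers and, for each, forming $q=ap+1$, testing primality of $q$ via \cite{AgrKaySax04}, drawing $\zeta$, and computing $\zeta^a\bmod q$ each costs $\polylog(q)=\polylog(ap)$ bit operations, so the \texttt{foreach} body costs $\softoh(a\cdot\polylog(ap))$. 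The running-total check $\prod_{q\in Q}q\ge 2C$ is dominated by this. Summing over the $m$ repetitions and substituting $\log\lambda\in\bigoh(\polylog(T+\log D)+\log\log C)$ gives worst-case bit complexity $\softohmu\big(\log C\cdot\polylog(T+\log D)\big)$, absorbing $\log\log C$ factors into the soft-oh. The only mild care needed is to confirm that $a$ candidate primes suffice even in the worst case where none are used — the loop simply runs to exhaustion, which is already accounted for.

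\textbf{Correctness.} Here I would split the failure event into three pieces. (i) \ref{proc:getprime} returns a composite $p$, or later returns a composite when invoked for the primitive roots: but the primality tests are deterministic (\cite{AgrKaySax04}), so each $q$ added to $Q$ is genuinely prime, and the only probabilistic failure of \ref{proc:getprime} is returning \fail{} or a composite $p$, bounded by $\mu/(4m)$ per call, i.e.\ $\mu/4$ total over the $m$ repetitions. (ii) The chosen $p$ is not a good \diffp{} for $\supp(F)$: since $\lambda\ge \tfrac{20}{3\mu}mT(T-1)\ln D \ge \tfrac{10}{3(\mu/(2m))}(T-1)\min(T,\infty)\ln D$, the argument of Lemma~\ref{lem:diffp}/\ref{lem:vanp} (adapting $D\mapsto D^{T-1}$) shows that in any single repetition the probability $p$ fails to be a good \diffp{} is at most $\mu/(2m)$ plus a further $\lambda/\ln^2\lambda$-fraction exception set from Lemma~\ref{lem:arithprog}; I need all $m$ repetitions to have the arithmetic-progression property simultaneously, so I would instead note that a $p$ lying in the exception set of Lemma~\ref{lem:arithprog} is what causes the $q$-search to come up short, and fold that into (iii). (iii) For a \emph{good} $p$ (not an exception), Lemma~\ref{lem:arithprog} guarantees $\ge\lambda^{0.89}/\ln\lambda$ primes $q\in(\lambda^{1.89},2\lambda^{1.89}]$ with $p\mid(q-1)$; each such $q$ has the form $ap+1$ with $a$ even in $[2,2\lambda^{0.89}]$, and the choice $a\ge 1.1\ln(2C)\ln^2\lambda$ ensures that among $a$ randomly (or deterministically distinct) chosen even values of $a$, with high probability enough yield primes $q$ to push $\prod q\ge 2C$; additionally each $\zeta$ is a $p$th primitive root mod $q$ (i.e.\ $\zeta^a\not\equiv 1$) with probability $\ge 1-1/p \ge 1/2$, and the failure probability over the batch is geometric. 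Each repetition independently succeeds with probability $\ge 1/2$ (this is where $786\le\lambda$ and the constants $1.1$, $1.35$, $20/3$ are calibrated), so $m=\lceil\lg(2/\mu)\rceil$ repetitions fail with probability $\le\mu/2$. The total failure probability is then $\mu/4+\mu/4+\mu/2 \le \mu$ — I would re-split the budget cleanly as $\mu/4$ for composite-$p$/\fail{}-from-\ref{proc:getprime}, $\mu/4$ for a bad \diffp{}, and $\mu/2$ for the $q$-search failing in all $m$ rounds.

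\textbf{Main obstacle.} The delicate point is item (iii): showing a \emph{constant} per-round success probability. This requires simultaneously controlling (a) the chance $p$ lands in the Lemma~\ref{lem:arithprog} exception set (prevalence $\le 1/\ln^2\lambda$, made $\le$ small constant by $\lambda\ge\lambda_0$, and further diluted because GetPrime samples uniformly from $(\lambda,2\lambda]$), (b) the chance that fewer than $a$ of the even integers in $[2,2\lambda^{0.89}]$ that give a prime $q$ suffice — here one has to count: there are $\ge\lambda^{0.89}/\ln\lambda$ good even $a$-values out of $\le\lambda^{0.89}$ total, so a random even $a$ gives a prime $q$ with probability $\ge 1/\ln\lambda$, and one needs about $\ln(2C)/\log q \approx \ln(2C)/(1.89\ln\lambda)$ such primes to reach the product $2C$; with $a=\Theta(\ln(2C)\ln^2\lambda)$ draws the expected number of successes is $\Theta(\ln(2C)\ln\lambda)$, comfortably exceeding the requirement, and a Chernoff bound gives constant success — and (c) the $\zeta$-failures, a product of at most $a$ terms each $\le 1/p$, negligible since $p\ge\lambda$ is large. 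Assembling these three into a clean "$\ge 1/2$ per round" bound, with all the named constants ($786$, $1.1$, $1.35$, $20/3$, $3.13$, $0.89$, $1.89$) checked to make the arithmetic close, is the real work; everything else is bookkeeping with the soft-oh and the union bound.
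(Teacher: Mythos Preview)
Your proposal is correct and follows essentially the same approach as the paper: bound the cost by counting the AKS primality tests inside the nested loops, then bound the failure probability by (a) a union bound over the $m$ iterations for $p$ failing to be a good \diffp{}, and (b) showing each outer-loop round succeeds with constant probability---via Hoeffding on the number of prime $q$'s found, combined with the exception bound of Lemma~\ref{lem:arithprog}---so that $m=\lceil\lg(2/\mu)\rceil$ rounds suffice. One small correction to carry into the constant-checking you deferred: the fraction of exceptional primes from Lemma~\ref{lem:arithprog} is $(\lambda/\ln^2\lambda)\big/\big(3\lambda/(5\ln\lambda)\big)=5/(3\ln\lambda)$, not $1/\ln^2\lambda$, since \ref{proc:getprime} samples from primes rather than all integers---this is precisely why the paper needs $\lambda\ge 786$ to push that fraction below $1/4$.
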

\begin{proof}
  The lower bound
  $\lambda \ge \max(786,1.35 \ln^{3.13} (2C))$ guarantees that there are
  sufficiently many even integers in the range $[2,2\lambda^{0.89}]$ in
  order for Step~\ref{getpqw:chooseA} to be valid, since for any
  $\lambda\ge786$, we have
  $\lambda^{0.89} > \lambda^{.32}\ln^2 \lambda > 1.1\ln (2C) \ln^2 \lambda$.

  For the running time, the outer loop does not affect the complexity in
  our notation because $m \in O(\log \tfrac{1}{\mu}) \in \softohmu(1)$.
  Observe also that
  $$\log \lambda \in \polylog\left(\lambda_0 + 
    T + \log D + \log C + \tfrac{1}{\mu}\right).$$
  The running time is dominated by the AKS primality tests in the inner
  loop, which are performed $O(m\log C \polylog(\lambda))$ times, each at
  cost $O(\polylog(\lambda))$, giving the stated worst-case 
  bit complexity.

  All of the checks for primality of $p$ and $q_i$'s, as well
  as the test that each $\omega_i$ is a $p$th primitive root of unity
  modulo $q_i$, are deterministic. Therefore the only possibility that
  the algorithm returns an incorrect result other than \fail{} is
  the probability that $p$ is not a good \diffp{} for $\supp(F)$.
  According to the proof of Lemma~\ref{lem:diffp},
  the condition $\lambda > \tfrac{20}{3\mu}mT(T-1)\ln D$, and using the union
  bound over all outer loop iterations, the probability that \emph{any}
  of the chosen $p$'s is not a good \diffp{} is less than $\mu/2$.

  Consider next a single iteration of the outer loop.
  This will produce a valid output unless insufficiently many good $q_i$'s
  and $\omega_i$'s are found for that choice of $p$.

  From Fact~\ref{fact:rosser} and Lemma~\ref{lem:arithprog}, the
  probability that $p$ is an exception to the lemma is at most
  $5/(3\ln\lambda)$, which is less than $\tfrac{1}{4}$ from the bound
  $\lambda\ge 786$.

  If $p$ is not an exception, then Lemma~\ref{lem:arithprog} tells us
  that the probability of each $q$ being prime is at least
  $\tfrac{1}{\ln \lambda}$. When $q$ is prime, since prime $p$ divides
  $(q-1)$, the probability that each $\zeta^a$ is a $p$-PRU in $\ZZ_q$
  is $(p-1)/p$, easily making the total probability of successfully adding to
  $Q$ and $W$ at each loop iteration at least $0.99/(\ln \lambda)$.

  Let $a \ge 1.1\ln(2C)\ln^2\lambda$ be the size of $\A$.
  By Hoeffding's inequality (\cite{Hoe63}, Thm.~1), the probability that
  fewer than $0.03 a / (\ln \lambda)$ integers are added to $Q$ after all
  iterations of the inner loop is at most
  $$\exp(-2 a (0.96/\ln\lambda)^2) < \exp(-2.02 \ln(2C)) < 0.25,$$
  where the last inequality holds because $C\ge 1$.

  Therefore, with probability at least $\tfrac{3}{4}$, and using
  again $\lambda \ge 786$, at least
  $$0.03 a/\ln \lambda = 0.033 \ln(2C) \ln \lambda > \ln(2C)/\ln
  \lambda$$
  integers are added to $Q$ each time through the inner loop.
  Since each $q_i>\lambda$, this means 
  $\prod_i q_i > 2C$, and the algorithm will return on
  Step~\ref{getpqw:checkret}.

  Combining with the probability that $p$ is an exception, we conclude
  that the probability the algorithm does \emph{not} return
  in each iteration of the outer loop is at most $1/2$. 
  As this is
  repeated $\lceil \lg \tfrac{2}{\mu}\rceil$ times, the probability is
  less than $\mu/2$ that the algorithm returns \fail{}. Using the union
  bound with the probability that any $p$ is not a good \diffp{}, we
  have the overall failure probability less than $\mu$.
\end{proof}

\section{Multiplying via Interpolation}\label{sec:basecase}

Let $F,G \in\ZZ[x]$ be sparse polynomials with 
$C=\norm{F}_\infty + \norm{G}_\infty$ and
$D=\deg F + \deg G$.
The subroutine \ref{proc:sumset} computes $\poss(F,G)$ 
by first reducing the degrees and
heights of the input polynomials and then multiplying them.
However, it cannot perform the multiplication using a recursive call to
\ref{proc:sparsemulzz} because the degrees are never reduced small
enough to allow the use of dense arithmetic in a base case.

Instead, we present here a ``base case'' algorithm
which, given $F$, $G$, and a bound
$S \ge \#F + \#G + \#(FG)$, computes $FG$, in time softly linear in $S,\log C$, and $\polylog(D)$.
Any algorithm with such running time suffices; we will use our own from \cite{ArnGieRoc14}, which is a Monte Carlo sparse
interpolation algorithm for univariate polynomials over
finite fields.

To adapt \cite{ArnGieRoc14} for multiplication over $\ZZ[x]$, 
we first choose a
``large prime'' $q > \max(2C, 2D)$ and treat $F,G$ and their product
$H=FG$ as polynomials over $\FF_q$. This size of $q$ ensures that no
extension fields are necessary.
The subroutine \ref{proc:bb} specifies how the unknown polynomial
$H=FG\in\FF_q[x]$
will be provided to the algorithm. It exactly matches the sorts of
black-box evaluations that \cite{ArnGieRoc14} requires. The entire
procedure is stated as \ref{proc:SM1}.

\begin{procedure}[btp]
  \caption{SparseInterpBB($F,G,\alpha,r$)\label{proc:bb}}
  \KwIn{$F,G\in\ZZ_q[x]$; $\alpha\in\ZZ_q$; $r\in\ZZ_{>0}$.}
  \KwOut{$H(\alpha z) \bmod (z^r-1)$, where $H=FG$.}
  $(\tilde{F}, \tilde{G}) \gets \left(F(\alpha z) \rem (z^r-1), G(\alpha z) \rem (z^r-1)\right)$\;
  $\tilde{H} \gets \tilde{F} \cdot \tilde{G} \rem (z^r-1)$ via 
  dense arithmetic\;
  \Return sparse representation of $\tilde{H}$ \;
\end{procedure}

\begin{lem}\label{lem:bb}
  The algorithm \ref{proc:bb} works correctly and uses
  $\softoh(S\log D\log q + r\log q)$ bit operations.
\end{lem}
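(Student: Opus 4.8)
The statement is about the cost and correctness of Procedure~\ref{proc:bb}, which given $F,G\in\ZZ_q[x]$, a point $\alpha\in\ZZ_q$, and a modulus $r$, returns $H(\alpha z)\bmod(z^r-1)$ where $H=FG$. Correctness is essentially immediate: the map $\phi\colon x\mapsto\alpha z$ is a ring homomorphism $\ZZ_q[x]\to\ZZ_q[z]$, and reduction modulo $z^r-1$ is a further ring homomorphism, so $\widetilde H=\phi(F)\phi(G)\bmod(z^r-1)=\phi(FG)\bmod(z^r-1)=H(\alpha z)\bmod(z^r-1)$. I would spend one sentence on this.

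The real content is the bit-complexity bound $\softoh(S\log D\log q + r\log q)$. I would break the cost into the three lines of the procedure. First, forming $\widetilde F=F(\alpha z)\rem(z^r-1)$: $F$ is sparse with at most $S$ terms, each of the form $c_i x^{e_i}$ with $e_i<D$; substituting $x=\alpha z$ turns the term into $c_i\alpha^{e_i} z^{e_i}$, and reducing $z^{e_i}$ modulo $z^r-1$ just replaces the exponent by $e_i\brem r$. So the work is: compute each power $\alpha^{e_i}$ by fast modular exponentiation in $\ZZ_q$ — $\bigoh(\log e_i)\le\bigoh(\log D)$ multiplications in $\ZZ_q$, each costing $\softoh(\log q)$ by Fact~\ref{fact:int_arithmetic}, hence $\softoh(\log D\log q)$ per term and $\softoh(S\log D\log q)$ over all terms; then reduce each exponent mod $r$ (negligible, $\softoh(\log D)$ each via Fact~\ref{fact:int_arithmetic}); then collect like-degree terms among the $\le S$ resulting monomials, which is a sort plus additions in $\ZZ_q$, costing $\softoh(S(\log r+\log q))$. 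The same applies to $\widetilde G$. This yields a dense polynomial $\widetilde F$ (and $\widetilde G$) of degree $<r$ with coefficients in $\ZZ_q$. Second, the dense product $\widetilde F\cdot\widetilde G\rem(z^r-1)$: by Fact~\ref{fact:dense_arithmetic} this is $\softoh(r\log q)$ bit operations. Third, returning the sparse representation of $\widetilde H$ is at most reading off the $\le r$ coefficients, $\softoh(r\log q)$. Summing the three contributions gives $\softoh(S\log D\log q + r\log q + S(\log r+\log q))$; since the final result $\widetilde H$ lives in a ring where it only makes sense to take $r<D$ is not assumed, but $S(\log r+\log q)$ is dominated by $S\log D\log q + r\log q$ in any regime relevant here (if $r\ge S$ then $S\log q\le r\log q$; and $S\log r$ is dominated by $S\log D\log q$ whenever $r\le D^{\log q}$, which certainly holds, or more simply one folds $\log r$ into $\log D$ as is customary here), so the bound collapses to the stated $\softoh(S\log D\log q+r\log q)$.

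The main obstacle is bookkeeping rather than any real difficulty: one must be careful that $\widetilde F$ and $\widetilde G$ are treated as \emph{dense} polynomials of degree $<r$ before multiplying, so that Fact~\ref{fact:dense_arithmetic} applies with the degree bound $r$ and not $D$; and one must make sure the cost of exponent reduction and like-term collection is correctly charged to the $S\log D\log q$ term and not accidentally to something quadratic in $S$. I would present the argument as a short enumeration matching the three lines of Procedure~\ref{proc:bb}, citing Fact~\ref{fact:int_arithmetic} for modular exponentiation and reduction, Corollary~\ref{cor:modxp_m} (or its proof) for the sparse reduction step, and Fact~\ref{fact:dense_arithmetic} for the dense multiplication, then adding up and simplifying the sum to the claimed bound.
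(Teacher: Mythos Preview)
Your proposal is correct and follows essentially the same decomposition as the paper: binary powering to form $F(\alpha z)$ at cost $\softoh(S\log D\log q)$, reduction modulo $(z^r-1)$ via Corollary~\ref{cor:modxp_m}, and dense multiplication via Fact~\ref{fact:dense_arithmetic} at cost $\softoh(r\log q)$. The only cosmetic difference is that the paper cites Corollary~\ref{cor:modxp_m} directly for a bound of $\softoh(S(\log D+\log q))$ on the reduction step---which is immediately absorbed by $S\log D\log q$---whereas you introduce an extra $S\log r$ term and then argue it away; your argument there is a bit laboured but sound.
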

\begin{proof}
  Correctness is clear. To compute $F(\alpha 
  z)$, we replace every term $cx^e$ of $F$ and $G$ with 
  $c\alpha^ex^e \in \ZZ_q$.  This costs 
  $\softoh(S\log D\log q)$ by binary powering.  Reducing 
  modulo 
  $(z^r-1)$ costs $\softoh( S(\log D+\log q))$ by Corollary 
  \ref{cor:modxp_m}.  Dense arithmetic costs $\softoh( r\log q)$ 
  bit 
  operations by Fact \ref{fact:dense_arithmetic}.  Summing these 
  costs yields $\softoh( S\log D\log q + r\log q)$.
\end{proof}

\begin{procedure}[Htbp]
  \caption{BasecaseMultiply($F,G,S,\mu$)\label{proc:SM1}}
  \KwIn{$F,G \in\ZZ[x]$; $S \ge \#F + \#G + \#(FG)$; 
    $\mu\in(0,1)$.}
  \KwOut{$H\in\ZZ[x]$ such that $\Pr[H \ne FG]<\mu$.}
  $q \gets $\ref{proc:getprime}$(2S\norm{F}_\infty\norm{G}_\infty +
    2\deg F + 2\deg G, \tfrac{\mu}{2})$ \;
  Call procedure MajorityVoteSparseInterpolate from \cite{ArnGieRoc14},
  with coefficient field $\ZZ_q$, sparsity bound $S$, degree bound 
  $\deg F + \deg G$, 
  error bound $\tfrac{\mu}{2}$, and black box 
  \ref{proc:bb}$(F,G,\cdot,\cdot)$. \;
\end{procedure}%

\begin{lem}
  The algorithm \ref{proc:SM1} correctly returns the product $H=FG$
  with probability at least $1-\mu$, and has bit complexity
  $\softohmu\left(
    S \log^2 D \left(\log C + \log D\right)
    \right).$
\end{lem}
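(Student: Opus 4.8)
The plan is to combine the cost and correctness of the black box \ref{proc:bb} with the cost and correctness guarantees of the sparse interpolation routine of \cite{ArnGieRoc14} invoked inside \ref{proc:SM1}. First I would pin down the prime $q$. By construction $q = $ \ref{proc:getprime}$(2S\norm{F}_\infty\norm{G}_\infty + 2\deg F + 2\deg G, \tfrac{\mu}{2})$, so by Lemma~\ref{lem:getprime_cost} it is prime except with probability $\tfrac{\mu}{2}$, and when it is prime we have $q > 2C$ and $q > 2D$, where $C = \norm{F}_\infty + \norm{G}_\infty \ge \norm{FG}_\infty / (S\min(\#F,\#G))$ suffices to recover integer coefficients of $H = FG$ from their images in $\ZZ_q$ (using the symmetric residue $\brem q$), and $q > 2D > \deg(FG)$ means $H$ is recovered exactly over $\FF_q$, i.e.\ no field extension is needed. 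Also $\log q \in \bigoh(\log S + \log C + \log D)\subseteq \softoh(\log C + \log D)$.

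Next I would quote the interface and guarantees of MajorityVoteSparseInterpolate from \cite{ArnGieRoc14}: given a black box returning $H(\alpha z)\bmod(z^r-1)$ over $\ZZ_q$, a sparsity bound $S$, a degree bound $\deg F + \deg G < D$, and error bound $\tfrac{\mu}{2}$, it recovers $H$ over $\ZZ_q[x]$ with probability at least $1-\tfrac{\mu}{2}$, making $\softoh(\log^2 D)$ black-box calls each with $r \in \softoh(S\log D)$ (up to $\polylog$ factors in $S$ and $\log D$), plus $\softoh(S\log^2 D)$ further $\ZZ_q$-operations. I would substitute the per-call cost of Lemma~\ref{lem:bb}, namely $\softoh(S\log D\log q + r\log q)$; with $r \in \softoh(S\log D)$ this is $\softoh(S\log^2 D\log q)$ per call, and over $\softoh(\log^2 D)$ calls gives $\softoh(S\log^4 D\log q)$ bit operations, which is absorbed into $\softohmu(S\log^2 D(\log C + \log D))$ once the $\polylog$ factors in $D$ are folded into the tilde and $\log q \in \softoh(\log C + \log D)$ is used; the remaining $\ZZ_q$-arithmetic and the final integer reconstruction (Fact~\ref{fact:int_arithmetic}) are of lower or equal order. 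Finally, a union bound over the two failure events (composite $q$, interpolation failure) gives overall success probability at least $1-\mu$, and correctness follows because, conditioned on $q$ prime, the interpolated polynomial over $\ZZ_q$ lifts uniquely to $H = FG \in \ZZ[x]$ given the degree and height bounds.

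The main obstacle I expect is bookkeeping rather than anything deep: one must be careful that the polylogarithmic overhead in \cite{ArnGieRoc14} is in $S$ and $\log D$ (not in $D$ itself), so that it is legitimately swallowed by $\softohmu$, and one must track that the number of black-box calls is $\softoh(\log^2 D)$ (which is where the $\log^2 D$ in the final bound originates) while each call already contains its own $\log D$ factor from the binary-powering step in Lemma~\ref{lem:bb} and an $r \in \softoh(S\log D)$ from the dense multiplication — getting the exponents of $\log D$ to line up with the claimed $\softohmu(S\log^2 D(\log C + \log D))$, after absorbing everything into the tilde, is the only point requiring genuine care. I would also note explicitly that the cost of computing $q$ via \ref{proc:getprime} is $\softohmu(\polylog(SC D))$ by Lemma~\ref{lem:getprime_cost}, which is negligible compared to the stated bound.
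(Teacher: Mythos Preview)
Your correctness argument (choice of $q$, union bound, lifting from $\ZZ_q$ to $\ZZ$) is essentially the paper's, and is fine. The complexity bookkeeping, however, does not close.

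First, an arithmetic slip: with $r\in\softoh(S\log D)$ the per-call cost from Lemma~\ref{lem:bb} is
\[
\softoh(S\log D\log q + r\log q)=\softoh(S\log D\log q),
\]
not $\softoh(S\log^2 D\log q)$. Second, the paper (citing Section~7 of \cite{ArnGieRoc14}) uses $\softohmu(\log D)$ black-box calls, not $\softoh(\log^2 D)$; this is precisely where the single extra $\log D$ comes from, giving $\softohmu(\log D)\cdot\softoh(S\log D\log q)=\softohmu(S\log^2 D\log q)$ and hence the stated bound after substituting $\log q\in\softoh(\log C+\log D)$.

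With your numbers you arrive at $\softoh(S\log^4 D\log q)$ and then try to ``fold the $\polylog$ factors in $D$ into the tilde.'' That step is invalid: by definition $\softohmu(\phi)=\bigoh(\phi\cdot\polylog(\phi/\mu))$, and here $\phi=S\log^2 D(\log C+\log D)$, so the hidden overhead is $\polylog(\log D)$, i.e.\ powers of $\log\log D$, not powers of $\log D$. An extra $\log^2 D$ therefore cannot be absorbed. You need to get the call count right (it is $\softohmu(\log D)$) and fix the per-call arithmetic; once you do, the exponents line up without any hand-waving.
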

\begin{proof}
  In order to use \ref{proc:bb} in the
  algorithm from \cite{ArnGieRoc14}, we simply replace the straight-line
  program evaluation on the first line of procedure ComputeImage with
  our procedure \ref{proc:bb}. Again, note that as the prime $q$ was
  chosen with $q>2D$, the MajorityVoteSparseInterpolate algorithm does
  not need to work over any extension fields.

  The correctness is guaranteed by the previous lemma,
  as well as Theorem 1.1 in \cite{ArnGieRoc14}. For the bit complexity,
  in Section 7 of \cite{ArnGieRoc14}, we see that the cost is dominated
  by $\softohmu(\log D)$ calls to the black box evaluation function,
  each of which is supplied $q\in \softoh(C+D)$, and 
  $r\in\softohmu(S\log D)$. Applying the bit complexity of Lemma \ref{lem:bb} gives the
  stated result.
\end{proof}

\section{Sumset Algorithm}\label{sec:sumset}

Let $\A, \B \in \ZZ \cap (-D,D)$ be nonempty, $R = \#\A+\#\B$, and
$S = \#(\A \setplus \B)$ throughout this section.
We prove as follows:\iftoggle{short}{\vspace{-1em}}{}
\begin{thm}\label{thm:supp}
Procedure $\ref{proc:sumset}(\A, \B, \mu)$ has bit complexity
$\softohmu(S\log D)$
and produces $\A \setplus \B$ 
with probability at least $1-\mu$.
\end{thm}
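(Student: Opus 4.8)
The plan is to compute $\A\setplus\B$ by a divide-and-conquer over bit-length, reducing the diameter (and hence the relevant modulus) at each level while recovering the sumset from modular images via $\congr$hinese remaindering. The key observation is that $\A\setplus\B = \poss(\fpoly,\gpoly)$ where $\fpoly=\sum_{a\in\A}x^a$ and $\gpoly=\sum_{b\in\B}x^b$ have all coefficients equal to $1$; so computing the sumset is exactly computing $\supp(\fpoly\gpoly)$ for these $0/1$-coefficient polynomials. I would first handle a base case: when $D$ is small enough (say $D\in\softohmu(S)$), I would simply use \ref{proc:SM1} to multiply $\fpoly$ and $\gpoly$ over a small prime field with sparsity bound $\bigO(S)$; by Lemma for \ref{proc:SM1} this costs $\softohmu(S\,\polylog(D))=\softohmu(S\log D)$, and reading off the exponents of the product gives $\A\setplus\B$.

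For the recursive step, I would pick a random \diffp{} $p$ for the unknown set $\A\setplus\B$ using \ref{proc:diffp} with diameter bound $2D$ and sparsity bound $S$. Reducing exponents mod $p$ gives sets $\A\brem p$ and $\B\brem p$; their sumset modulo $p$, together with the ``carry'' information obtained by also reducing with respect to a second modulus (or by recursing on $\lfloor a/p\rfloor$ for the representatives that survive), lets me reconstruct the true elements of $\A\setplus\B$ by CRT. Concretely, if $p$ is a good \diffp{} then the map $e\mapsto e\brem p$ is injective on $\A\setplus\B$, so from $(\A\brem p)\setplus(\B\brem p)$ — computed by a recursive call with the much smaller diameter $2p$ — I recover all low-order residues, and one more recursive sumset computation on the high-order parts (diameter $\bigO(D/p)$) followed by CRT gives the answer. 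The recursion depth is $\bigO(\log\log D)$ since the diameter is repeatedly replaced by roughly its square root (as $p\in\bigO(\tfrac1\mu S^2\log D)$ and we recurse on $D/p$), and at every level the ambient sparsity is still $\bigO(S)$ because a good \diffp{} causes no collisions; multiplying the per-level cost $\softohmu(S\log D)$ by $\bigO(\log\log D)$ stays within $\softohmu(S\log D)$.

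For correctness I would argue by a union bound: each of the $\bigO(\log\log D)$ recursive calls to \ref{proc:diffp} (and the base-case call to \ref{proc:SM1}) is invoked with failure parameter $\mu$ scaled down by the recursion depth, so with probability at least $1-\mu$ every chosen prime is a good \diffp{} and every modular multiplication is correct; when all of these succeed, the reconstruction is exact because injectivity of each reduction guarantees the CRT recombination recovers precisely $\A\setplus\B$ with no spurious or missing elements. The running-time bound then follows from the cost of \ref{proc:diffp} (namely $\polylog(p)$, negligible), the cost of one recursive sumset call at each of two branches per level, and the $\softohmu(S\log D)$ base case, all summed over $\bigO(\log\log D)$ levels.

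The main obstacle I anticipate is making the reduction-and-reconstruction step genuinely output-sensitive: naively, reducing mod $p$ loses the high-order bits, and tracking ``which high part goes with which residue'' must itself be done as a sumset problem of size $\bigO(S)$ rather than $\bigO(R^2)$. The trick is that, because $p$ is a good \diffp{} for $\A\setplus\B$, I can tag each surviving low residue with its unique preimage's high part by recursively computing the sumset of the pairs $(\lfloor a/p\rfloor, a\brem p)$ — viewed as integers in a slightly larger range — so that no intermediate object ever exceeds $\bigO(S)$ elements. Getting the bookkeeping of this tagging right, and ensuring the diameter really does shrink fast enough to keep the recursion depth at $\bigO(\log\log D)$, is the crux; everything else is routine application of Lemma~\ref{lem:diffp}, Fact~\ref{fact:crt}, and the base-case lemma for \ref{proc:SM1}.
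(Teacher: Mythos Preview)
Your proposal has two genuine gaps that the paper addresses with entirely different machinery.

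First, the recursion-with-CRT idea does not close. After reducing modulo a good \diffp{} $p$ you recover the multiset $(\A\setplus\B)\brem p$ of $S$ residues, but to reconstruct the actual elements you need, for each residue, its matching residue modulo a second modulus (or its matching high-order part). You have no way to perform that matching: two independent sumset computations modulo coprime primes $p$ and $p'$ give you two unlabeled sets of $S$ residues each, and pairing them is itself a nontrivial problem of size $S^2$. The ``high-order parts'' variant fails for the same reason, compounded by carries: $\lfloor(a+b)/p\rfloor$ is not determined by $\lfloor a/p\rfloor$ and $\lfloor b/p\rfloor$ alone. Your proposed fix of tagging each element by the pair $(\lfloor a/p\rfloor,\,a\brem p)$ encoded as a single larger integer simply re-encodes $a$ itself, so the diameter does not shrink and the recursion makes no progress. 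The paper avoids this entirely: it computes the product $H_1=FG\bmod(x^p-1)$ and simultaneously $H_2=F((\ell+1)x)\,G((\ell+1)x)\bmod(\ell^2,x^p-1)$, so that an uncollided term $cx^e$ of $H$ contributes $cx^{e\brem p}$ to $H_1$ and $c(1+e\ell)x^{e\brem p}$ to $H_2$; dividing coefficients recovers $e$ directly, term by term, with no matching problem and no recursion.

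Second, every call you make to \ref{proc:diffp} and to \ref{proc:SM1} requires an a~priori sparsity bound on $\A\setplus\B$, but $S$ is the unknown output size. Using the naive bound $R^2$ would make the cost quadratic. The paper inserts a separate size-estimation phase: after fixing $p$, it repeatedly doubles a guess $S^*$ and tests whether $\#H_1\le S^*$ by reducing $H_1$ modulo $x^q-1$ for a random $\tfrac12$-\diffp{} $q$ sized to $S^*$, stopping once the test accepts; this yields $S^*\in[S,4S)$ in cost $\softohmu(S\log D)$ before any expensive multiplication is attempted.
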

We compute the sumset $\A \setplus \B$ as $\supp(H)$, $H=FG \in 
 \ZZ[x^{-1},x]$, where
$
F = \sum_{a \in \A}x^a$ and $G = \sum_{b \in \B}x^b$.  Here $H$ has exponents in $(-2D, 2D)$ and 
$\norm{H}_{\infty} < R$.  Thus it suffices to construct the 
exponents of $H$ modulo $\ell \geq 4D$.  Moreover, we have 
$\supp(H)=\poss(F,G)$, and that
\begin{equation}\label{eqn:sumset_size}
R-1 \le \#(\A \setplus \B) = S \le R^2.
\end{equation}

\subsection{Estimating Sumset Output Size}
We first show how to compute a tighter upper bound on the true value of
$S=\#H=\#(A \setplus B)$.  To this end, let $p\in\bigoh(D)$ be a good \diffp{} for $\supp(H)$,
using the naive bound $R^2$ from \eqref{eqn:sumset_size}, and 
define the $H_1=F_1G_1$, where $F_1,G_1\in\ZZ[x]$ are defined by
$$
F_1 = F \rem (x^p-1), \qquad G_1 = G \rem (x^p-1).
$$
Then $\deg H_1<2p$ and each term $cx^e$ of $H$ corresponds to
either one or two terms in $H_1$, of
degrees $e \brem p$ and $(e \brem p)+p$. Therefore 
$$\#H_1/2 \leq \#H=S \leq \#H_1.$$

We will compute an approximation $S^* \approx S$ such that
$S^*/2 < \#H_1 \leq S^*$, and therefore $S^*/4 < S \leq S^*$.
To this end we present a test that, given $S^*$, always
accepts if $\#H_1 \leq S^*$ and probably rejects if $\#H_1 > 2S^*$.  We do this for $S^*$ initially $2$, doubling whenever the test rejects.

Given the current estimate $S^*$, we next choose a (1/2)-\diffp{} $q$ for 
the support of any $2S^*$-sparse polynomial with degree $2p>\deg 
H_1$,
and compute
$H^* = H_1 \bmod (x^q-1)$.  We work modulo
$m=R^2 > \lVert H \rVert_1 \geq \lVert H^* \rVert_\infty$, such that
none of the coefficients of $H^*$ vanish modulo $m$.  If $H_1$ is
$S^*$-sparse then $H^*$ is as well.  If $H_1$ has $2S^*$ terms
then, as fewer than $S^*$ terms of $H_1$ are in collisions, $H^*$ is
{\em not} $S^*$-sparse.  As no terms of $H_1$ vanish modulo $m$,
additional terms in $H_1$ can only increase $\#H^*$.

We choose $q$ so that the test is correct with probability at least
$3/4$. By iterating $\lceil 8\ln(8/\mu)\rceil$ times, by Hoeffding's
inequality, the probability is at least $1-\mu/4$ that the test 
runs correctly in at least half of the iterations. As $\#H_1 < 
2R^2$, it suffices that the test is correct $\lceil \log_2 R + 1 
\rceil$ times.

The \ref{proc:sumset} procedure performs this test on lines
\ref{proc:ss:size0}--\ref{proc:ss:sizen}.
By Corollary \ref{cor:modxp_m} the respective costs of 
constructing $F^*$ and $F^* \bmod (R^2, x^q-1)$ are 
$\softohmu(R\log D)$ and $\softohmu( R\log p\log R)$, and similarly 
for $G^*$.  The cost of the dense arithmetic here is $\softohmu( 
q\log R)$.  Given that $p \in \softohmu(D), q \in \softohmu( 
S\log p)$, the total bit-cost of this part is
$\softohmu(S\log D)$.

\subsection{Computing Sumset}

Armed with the bound $S^*/4 < S \le S^*$, 
we aim to compute $H=FG$. 
Our approach is to compute images 
\begin{align*}
H_1 &= F_1G_1 
\mod(\ell^2, x^p-1), \\
H_2 &= F((\ell+1)x)G((\ell+1)x) \mod (\ell^2, x^p-1),
\end{align*}
for an integer $\ell = 8D \ge \max(\deg H, \norm{H}_1)$.

Since the coefficients of $H_2$ are scaled by powers of $(\ell+1)$, a
single term $c x^e$ in the original polynomial $H$ becomes
$c x^{e\brem p}$ in $H_1$ and $c (\ell+1)^e x^{e\brem p}$
in $H_2$, and if they are uncollided we can discover 
$(\ell+1)^e$ by computing their quotient. Modulo $\ell^2$, this
quotient $(\ell+1)^e$ is simply $e\ell+1$, from which we can 
obtain the exponent $e$.
This idea is similar to the ``coefficient ratios'' technique suggested
by \cite{HL15}, but working modulo $\ell^2$ allows us to avoid costly
discrete logarithms.  Procedure \ref{proc:sumset} contains the complete description.

\begin{procedure}[htb]
  \caption{Sumset($\A,\B,\mu$)\label{proc:sumset}}
  \KwIn{$\A,\B \subseteq \ZZ$ with $\#\A + \#\B = R$ and $\max_{k \in A \cup B}|k| < D$; $\mu \in 
  (0,1)$.}
  \KwOut{Set $\S\subset\ZZ$ such that
  $\Pr[\S\ne \A\setplus\B]<\mu$.}
  $p \leftarrow $\ref{proc:diffp}$(R^2, 4D, 1, \mu/4)$\;
  $(F_1, G_1) \gets \left(\sum_{a \in \A}x^{a \brem p}, \sum_{b \in 
    \B}x^{b \brem p}\right)$\;
  \medskip

  $S^* \leftarrow 2$ \label{proc:ss:size0} \;
  \RepTimes{$\lceil\max(8\ln(8/\mu), \log_2 R+1)\rceil$}{
    $q \gets $\ref{proc:diffp}$(2S^*,2p,\frac{1}{2},\frac{3}{4})$ \;
    $H^* \leftarrow F_1 G_1 \bmod (R^2, x^q-1)$, via dense arithmetic \;
    \lIf{$\#H^*>S^*$}{$S^* \gets 2S^*$ \label{proc:ss:sizen}}
  }
  \medskip

  $\ell \gets 8D$ \;
  $(F_2, G_2) \gets \left(\sum_{a \in \A}(a\ell+1)x^{a \brem p}, 
  \sum_{b \in \B}(b\ell+1)x^b\right)$\;
  $H_1 \leftarrow $\ref{proc:SM1}$(F_1, G_1, S^*, \mu/4)$\;
  $H_2 \leftarrow $\ref{proc:SM1}$(F_2, G_2, S^*, \mu/4)$\;
  \lFor{$j=1,2$}{$H_j \leftarrow H_j \bmod (\ell^2, x^p-1)$}
  \medskip

  $\S \leftarrow$ empty list of integers \;
  \For{every nonzero term $cx^e$ of $H_1$}{
    $c' \leftarrow$ coefficient of degree-$e$ term of $H_2$\;
    \If{$c\mid c'$ and $\ell \mid (c'/c - 1)$ as integers}{
      Add $(c'/c - 1)/\ell$ to $\S$ \;
    }
    \lElse(\hfill //cannot reconstruct an exponent){\Return $\fail$}
  }
  \Return $\S$ \;
\end{procedure}

\ref{proc:sumset} has four steps that are probabilistic: 
choosing a good \diffp{} $p$, estimating the sumset size
$S = \#(\A \setplus \B)$,
and constructing $H_1$ and $H_2$.
As each is set to fail with probability less than $\mu/4$, 
\ref{proc:sumset} succeeds with probability at least $1-\mu$.

We now analyze the total cost of this algorithm.  
\ref{proc:diffp} produces $p$ of size
$\log p < \polylog(R + \log D + \tfrac{1}{\mu})$.
Constructing $F_1, F_2, G_1, G_2$ at the beginning,
and the reduction of $H_1,H_2$ modulo 
$(\ell^2,x^p-1)$ at the end, both 
cost $\softohmu(S\log D)$ bit operations.

The search for $S^*$ costs $\softohmu(S\log D)$
from the previous section.  Finally, as 
$\norm{F_1}_\infty < \ell/2$, $\deg F_1 < p$,
and similarly for $F_2$, $G_1$, and $G_2$,
the sparse multiplications due to \ref{proc:SM1} also costs 
$\softohmu(S \log D)$ bit operations.
These dominate the complexity as stated in
Theorem~\ref{thm:supp}.

\section{Multiplication with support}\label{sec:knownsupp}

We turn now to the problem of multiplying sparse $F,G \in \ZZ[x]$,
provided some $\S \supseteq \supp(FG)$.  This algorithm is used twice in our overall 
multiplication algorithm: first with large $\S = \poss(F,G)$ 
but small coefficients, then with the actual support 
$\S = \supp(FG)$ but full-size coefficients.

\iftoggle{short}{}{
The bit-complexity of our algorithm is summarized in the following
theorem.
}

\begin{thm}\label{thm:knownsupp}
Given $F, G \in \ZZ[x]$ and a set $\S \subset \ZZ_{\ge 0}$
such that $\supp(FG)\subseteq \S$, the product $FG$ can be computed in time
$\softohmu\left(\left(\#F + \#G + \#\S\right)
  \left(\log C + \log D\right)
\right),$
where $C = \norm{F}_\infty + \norm{G}_\infty$ and $D > \deg F + \deg G$.
\end{thm}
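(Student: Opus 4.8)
The plan is to follow the ``evaluate modulo $(x^p-1)$, recover via Chinese remaindering'' paradigm already developed in the preceding sections, now exploiting the fact that a superset $\S$ of the support is given so that no sparse interpolation is needed — only a coefficient-recovery step. First I would reduce to recovering the coefficient vector. Since $\supp(FG)\subseteq\S$ and we know $\S$, writing $FG=\sum_{e\in\S} h_e x^e$, it suffices to determine each $h_e\in\ZZ$. We have the bound $\norm{FG}_\infty \le \norm{F}_1\norm{G}_1 < C^2\min(\#F,\#G)$, so each $|h_e| < C^2\min(\#F,\#G)$, meaning it suffices to compute $FG$ modulo a collection of primes $q_1,\dots,q_k$ with $\prod_i q_i$ exceeding twice that bound, i.e. $k\log q_i \in \softoh(\log C + \log(\#F+\#G))$, and then apply CRT (Fact~\ref{fact:crt}) coordinatewise at cost $\softoh(\#\S(\log C + \log D))$. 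So the real work is: for a single prime $q$ of bit-size $\softoh(\log C)$, compute $FG \bmod q$ as a sparse polynomial, given $\S$.

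To do that single modular multiplication, I would use a good \diffp{} $p$ for $\S$, obtained from \ref{proc:diffp}$(\#\S, D, 1, \mu')$ with an appropriately small $\mu'$, so that $p \in \bigoh(\tfrac1{\mu'}\#\S^2\log D)$ and all elements of $\S$ are distinct modulo $p$ — hence the map $e\mapsto e\brem p$ is injective on $\S\supseteq\supp(FG)$. Then reduce $F,G$ modulo $(x^p-1)$ and $q$ (Corollary~\ref{cor:modxp_m}, cost $\softoh((\#F+\#G)(\log D + \log q))$), multiply the two dense polynomials in $\ZZ_q[x]/\la x^p-1\ra$ via Fact~\ref{fact:dense_arithmetic} at cost $\softoh(p\log q)$, and obtain $(FG)\bmod(x^p-1,q)=\sum_{e\in\S}(h_e\bmod q)x^{e\brem p}$; because $p$ separates $\S$, reading off the coefficient of $x^{e\brem p}$ for each $e\in\S$ recovers $h_e\bmod q$ with no collisions. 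The one wrinkle is that the product $FG$ might have a coefficient $h_e$ that is nonzero as an integer but vanishes modulo the particular $q$; this is fine for CRT reconstruction provided we genuinely use $\prod q_i > 2\norm{FG}_\infty$, and the final reconstructed coefficients that come out zero are simply dropped from the support. Also one should double-check that reducing modulo $q$ before reducing modulo $(x^p-1)$ versus after gives the same result — it does, since both are ring homomorphisms.

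Assembling the cost: choosing $p$ and the $q_i$'s costs $\polylog$ factors (Lemmas~\ref{lem:diffp}, \ref{lem:getprime_cost}); the $k\in\softoh(\log C / \log q_i + 1)$ modular multiplications each cost $\softoh((\#F+\#G)(\log D+\log q_i) + p\log q_i)$ with $p\in\softohmu(\#\S^2\log D)$ — wait, that $\#\S^2$ is too expensive, so here is the one genuinely delicate point. To keep the complexity linear rather than quadratic in $\#\S$, I would not demand a single good \diffp{} outright; instead follow the pattern used for the sumset size estimation and in \cite{vdHLec13}: use a $\tfrac12$-\diffp{} $p$ (so $p\in\bigoh(\tfrac1{\mu'}\#\S^2\log D)$ still — no, that is also quadratic)... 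Actually the resolution is to take $p$ of size $\bigoh(\#\S\,\polylog)$ by allowing a constant fraction of collisions and \emph{iterating} over $\bigoh(\log\#\S)$ independent choices of $p$, each recovering the (collision-free) coefficients of a constant fraction of the terms of $\S$, and taking a majority/consistency vote across iterations so that every term is resolved with high probability; summing a geometric-type series keeps the total at $\softohmu((\#F+\#G+\#\S)(\log C+\log D))$. This ``iterate with partial progress'' step — bounding how many repetitions suffice so that every one of the $\#\S$ target exponents lands in a collision-free class in at least one (better, a majority of) trials, and arguing the votes are consistent — is where I expect the main technical obstacle to lie; everything else is a routine assembly of Corollary~\ref{cor:modxp_m}, Fact~\ref{fact:dense_arithmetic}, and Fact~\ref{fact:crt}. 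Finally, the multivariate and finite-ring extensions follow by Kronecker substitution, using the linear dependence on $\log D$, exactly as remarked after Theorem~\ref{thm:main}.
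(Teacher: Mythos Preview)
Your framework is right — reduce exponents modulo a \diffp{} $p$, recover each coefficient $h_e$ modulo a family of primes $q$, then CRT — and you correctly see that knowing $\S$ lets you simply read off the coefficient at $e\brem p$ whenever $e$ is collision-free. But there is a genuine complexity gap. Dense multiplication in $\ZZ_q[x]/\la x^p-1\ra$ costs $\softoh(p\log q)$, and \emph{any} \diffp{} (good or $\tfrac12$-) for a size-$S$ subset of $[0,D)$ has $p\in\Omega(S\log D)$ by Lemma~\ref{lem:diffp}; that $\log D$ factor is unavoidable. Summing over enough $q$'s to cover height $C$, your dense multiplications alone cost $\softoh\big(p\textstyle\sum_i\log q_i\big)=\softoh(S\log D\cdot\log C)$, which exceeds the target $\softoh(S(\log D+\log C))$ by a factor of $\min(\log D,\log C)$ — not absorbed by $\softohmu$. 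The iterate-and-halve trick you propose repairs the $S^2$ blowup from a good \diffp{}, but does nothing about this multiplicative $\log D\cdot\log C$: each round still pays $\softoh(S_j\log D\log C)$, and the geometric sum is $\softoh(S\log D\log C)$.

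The paper's route is different precisely here. It also chooses a good \diffp{} $p$ (in fact of size $\bigoh(S^2\log D)$), but then \emph{never performs dense arithmetic modulo $(x^p-1)$}. Instead, for each small prime $q$ it picks a primitive $p$th root of unity $\omega\in\ZZ_q$, evaluates $F$ and $G$ at $S$ consecutive powers of $\omega$ via a transposed Vandermonde product (cost $\softoh(S\log q)$), multiplies pointwise, and applies the inverse transposed Vandermonde to recover the coefficients of $FG$ on the known support $\S_p$. Because $p$ is a good \diffp{} and $\omega$ has order $p$, the nodes $\omega^{e_i\brem p}$ are distinct, so the system is invertible. The per-$q$ cost is $\softoh(S\log q)$, \emph{independent of $p$}; summing over the $q$'s gives $\softoh(S\log C)$, plus a one-time $\softoh(S\log D)$ to reduce the exponents of $\S$ modulo $p$. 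The price paid is that $\ZZ_q$ must contain $p$th roots of unity, i.e.\ $p\mid q-1$; producing enough such primes $q$ is exactly the content of Section~\ref{ssec:pqw} (Lemma~\ref{lem:arithprog} and \ref{proc:getpqw}), which your proposal does not invoke.
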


This is $\softohmu$-optimal, as it
matches the bit-size of the inputs.

Our algorithm requires a small randomly-selected good \diffp{} $p$ 
with $\bigoh(\log S + \log\log D)$ bits,
and a series of pairs $(q,\omega)$, where each $q$ is
a slightly larger prime with $\bigoh(\log p)$ bits,
and $\omega$ is an order-$p$ element in $\ZZ_q$.
These numbers are provided by \ref{proc:getpqw} (Sec.~\ref{sec:redexp}).
Our algorithm works by first reducing the exponents modulo $p$, then
repeatedly reducing the
coefficients modulo $q$ and performing evaluation and interpolation at
powers of $\omega$. This inner loop follows exactly the algorithm of
\cite{vdHLec13} and \cite{KY89} for applying a transposed Vandermonde
matrix and its inverse. Since $p$ is a good \diffp{} for the support of
the product, there are no collisions and this gives us each
coefficient modulo $q$. The process is then repeated $\bigoh(\log C)$ times
in order to recover the full coefficients via Chinese remaindering.

\subsection{Comparison to prior work}

Without affecting the complexity, we may assume that 
$\S$ contains the support of the inputs too, i.e.,
$\supp(F)$ and $\supp(G)$. We also assume
that $\max \S = \deg FG$, such that no $e \in \S$
is too large to be an exponent of $FG$.
Under these assumptions, and writing $S = \#\S$,
the stated complexity of our algorithm is simply
$\softohmu(S(\log C + \log D))$.

The problem of computing the coefficients of a sparse product, 
once the exponents of the product are given, has been recently and 
extensively investigated by van der Hoeven and Lecerf, where they 
present an algorithm whose bit complexity (in our notation) is
$$\textstyle\softohmu\left(\left(\sum_{e \in \S}\log e\right) (\log 
D + \log C)\right)$$
(\cite{vdHLec13}, Corollary 5). 
As $\sum_{e\in\S}\log e \in \bigoh(S\log D)$, 
the algorithm here saves a factor
of at most $\bigoh(\log D)$ in comparison, which could be 
substantial if the exponents are very large.

Their algorithm is more efficient if the support superset $\S$ is \emph{fixed}, in
which case they can move the most expensive parts into precomputation
 and compute the result in the same soft-oh time as our approach,
$\softohmu(S (\log D + \log C))$. Furthermore, the support bit-length
$\sum_{e\in\S}\log e$ is at most $\bigoh(S\log D)$, but could be as small as
$\Omega(S\log S + \log D)$, for example if the support contains only a
single large exponent. In such cases our savings is only on the order of
$(\log D)/S$.

\subsection{Transposed Vandermonde systems}

Applying transposed Vandermonde systems, and their inverses, is an
important subroutine in sparse interpolation algorithms, and
efficient algorithms are discussed in detail by \cite{KY89} and
\cite{vdHLec13}. We restate the general idea here and refer the reader
to those papers for more details.

If $\tilde{F}$ is a dense polynomial, it is well known that applying the
Vandermonde matrix $\V(\theta_1,\ldots,\theta_D)$ to a vector of
coefficients from $\tilde{F}$ corresponds to evaluating $\tilde{F}$
at the points $\theta_1\ldots,\theta_D$.
Applying the inverse Vandermonde matrix corresponds to interpolating
$\tilde{F}$ from its evaluations at those points.
The product tree
method can perform both of these using 
$\softoh(D)$ field operations (\cite{MCA}, Chapter 10).

If $F$ is instead a sparse polynomial
$F = \sum_{e\in\S} c_1 x^e$, evaluating $F$ at consecutive powers of
a single high-order element $\omega$ corresponds to multiplication 
with the transposed Vandermonde matrix:
$$
  \V(\omega^{e_1},\ldots,\omega^{e_S})^T
  (c_1,\ldots,c_S)^T
  = (F(1),\ldots,F(\omega^{S-1}))^T
$$

The transposition principle tells us it is possible to
compute the maps $V^T$ and $(V^T)^{-1}$ in essentially the same time as
dense evaluation and interpolation. %
In particular, if the coefficients $c_i$ are in the modular ring
$\ZZ_q$, then the transposed Vandermonde map and its inverse can be
computed using $\softoh(S\log q)$ bit operations \cite{vdHLec13}.

\subsection{Statement and analysis of the algorithm}

\begin{procedure}[htbp]
  \caption{SparseMulCoeffs($\S, F, G, \mu$)\label{proc:knownsupp}}
  \KwIn{Exponents $\S = (e_1, e_2,\ldots e_S)$;
    coefficient lists $(f_1,\ldots, f_S)$ and
    $(g_1,\ldots, g_S)$, with $F,G\in\ZZ[x]$
    implicitly defined as 
    $F = \sum_{1\le i \le S} f_i x^{e_i}$ and
    $G = \sum_{1\le i \le S} g_i x^{e_i}$; 
    error bound $\mu\in(0,1)$.}
  \KwOut{$(h_1,\ldots,h_S)\in\ZZ^S$ such that, with probability
    least $1-\mu$, $FG=\sum_{1\le i\le S} h_i x^{e_i}$.}
  $C \gets (\max_{1\le i\le S} |f_i|)(\max_{1\le i \le S} |g_i|) S$ \;
  $p, Q, W \gets $\ref{proc:getpqw}$(\max\S, \#\S, C, \mu)$ 
    \label{proc:ks:choosep}\;
  $\S_p \gets (e_1 \brem p, e_2 \brem p,\ldots, e_S \brem p)$
    \label{proc:ks:reduce}\;
  $H \gets$ list of $S$ empty lists of integers \;
  \ForEach{$(q,\omega) \in Q,W$\label{proc:ks:eachq}}{
    \ForEach{$e_{ip} \in \S_p$}{
      $v_i \gets \omega^{e_{ip}} \in \ZZ_q$ by binary powering 
        \label{proc:ks:binpow}\;
    }
    $\mathbf{a} \gets \V(v_1,\ldots,v_S)^T (f_1,\ldots,f_S)^T \in \ZZ_q^S$ 
      \label{proc:ks:vtrans1}\;
    $\mathbf{b} \gets \V(v_1,\ldots,v_S)^T (g_1,\ldots,g_S)^T \in
    \ZZ_q^S$
      \label{proc:ks:vtrans2}\;
    $\mathbf{c} \gets (a_1 b_1,\ldots, a_S b_S)^T \in \ZZ_q^S$ 
      \label{proc:ks:pairmul}\;
    \If{$\V(v_1,\ldots,v_S)$ is invertible}{
      $(h_{1p}, \ldots, h_{Sp}) \gets (\V(v_1,\ldots,v_S)^T)^{-1}
      \mathbf{c} \in \ZZ_q^S$
        \label{proc:ks:vtinv}\;
      \lFor{$1 \le i \le S$}{Add $h_{ip}$ to the list $H[i]$}
    }
  }
  \For{$1 \le i \le S$}{
    $h_i \gets$ Chinese remaindering from images $H[i]$ modulo integers
    in $Q$ \;
  }
  \Return $(h_1,\ldots, h_S)$ \;
\end{procedure}

\begin{lem}
  Procedure~\ref{proc:knownsupp} works as stated when $\S \supseteq \supp(FG)$.  In any case it has bit-complexity
  $$\softohmu\left(
      \shortsum_{e\in\S}\log e + S\log C
  \right).$$
\end{lem}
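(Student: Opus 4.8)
The plan is to verify correctness first, then account for the running time by walking through the procedure line by line. For correctness, assume $\S \supseteq \supp(FG)$. Line~\ref{proc:ks:choosep} calls \ref{proc:getpqw} with parameters $\max\S$, $\#\S$, and $C = (\max|f_i|)(\max|g_i|)S \ge \norm{FG}_\infty$, so by Lemma~\ref{lem:getpqwcost} it returns, with probability at least $1-\mu$, a good \diffp{} $p$ for (a set containing) $\supp(FG)$, together with primes $q_1,\ldots,q_k$ whose product exceeds $2C > 2\norm{FG}_\infty$ and order-$p$ roots of unity $\omega_j \in \ZZ_{q_j}$. Condition on this event. Since $p$ is a good \diffp{} for $\supp(FG)$, after reducing exponents modulo $p$ on line~\ref{proc:ks:reduce} the residues $e_i \brem p$ are distinct for those $i$ with $h_i \ne 0$; hence for each $j$ the points $v_i = \omega_j^{e_i\brem p}$ corresponding to nonzero terms of $H$ are distinct powers of $\omega_j$, so that Vandermonde submatrix is invertible. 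For such $j$, lines~\ref{proc:ks:vtrans1}--\ref{proc:ks:vtinv} compute $F(\omega_j^t)$ and $G(\omega_j^t)$ for $t=0,\ldots,S-1$ via the transposed Vandermonde map, multiply pointwise to get $H(\omega_j^t)$, and invert to recover each $h_i \bmod q_j$ (entries $i$ with genuinely zero coefficient come out $0$, consistent across all $j$). Chinese remaindering over the $q_j$ then reconstructs each $h_i$ exactly, since $\prod_j q_j > 2\norm{FG}_\infty$ places $h_i$ uniquely in the symmetric residue interval. This proves the ``works as stated'' clause.

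For the complexity bound, which holds unconditionally, I would tally the lines. Line~\ref{proc:ks:choosep} costs $\softohmu(\log C \cdot \polylog(S + \log\max\S))$ by Lemma~\ref{lem:getpqwcost}, which is absorbed into $\softohmu(S\log C)$ (note $\log\max\S \le \polylog$ in the relevant quantities). The prime $p$ has $\bigoh(\log S + \log\log\max\S)$ bits and each $q_j$ has $\bigoh(\log p)$ bits; reducing the $S$ exponents modulo $p$ on line~\ref{proc:ks:reduce} costs $\softoh(\sum_{e\in\S}\log e)$ by Fact~\ref{fact:int_arithmetic}. The outer loop of line~\ref{proc:ks:eachq} runs $k \in \bigoh(\log C)$ times (since each $q_j > \lambda \ge$ a constant, $k \in \bigoh(\log C / \log\lambda) \subseteq \bigoh(\log C)$). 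Inside, the $S$ binary powerings on line~\ref{proc:ks:binpow} cost $\softoh(S\log p \log q) = \softoh(S\,\polylog(S+\log\log\max\S))$, and the three transposed-Vandermonde operations plus the pointwise products on lines~\ref{proc:ks:vtrans1}--\ref{proc:ks:vtinv} cost $\softoh(S\log q_j)$ bit operations each, by the transposition-principle result cited from \cite{vdHLec13}. Multiplying by $\bigoh(\log C)$ iterations gives $\softoh(S\log C \cdot \polylog)$; the invertibility test is a byproduct of the interpolation step and adds nothing. Finally the $S$ Chinese-remaindering reconstructions cost $\softoh(\sum_i \log\prod_j q_j) = \softoh(S\log C)$ by Fact~\ref{fact:crt}. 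Summing, the total is $\softohmu(\sum_{e\in\S}\log e + S\log C)$.

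The main delicate point is the correctness argument — specifically, checking that every $i$ with a truly zero product coefficient is handled consistently: for those $i$ the value $h_i \bmod q_j$ is $0$ for each $j$ (assuming $\V(v_1,\ldots,v_S)$ really is invertible for that $j$), so CRT returns $h_i = 0$, exactly as desired. This relies on $p$ being a good \diffp{} for the \emph{full} index set $\{e_1,\ldots,e_S\}$, not merely for $\supp(FG)$; since we feed $\max\S$ and $\#\S$ to \ref{proc:getpqw}, and \ref{proc:getpqw} produces a \diffp{} for any set of that size and diameter, this is exactly what Lemma~\ref{lem:getpqwcost} guarantees, so all the $v_i$ are distinct and the Vandermonde system is always invertible — the \textbf{if} on the invertibility test never actually fails in the good event. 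Everything else is a routine cost accounting against the two target terms.
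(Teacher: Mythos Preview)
Your proposal is correct and follows essentially the same approach as the paper: condition on the success of \ref{proc:getpqw}, argue that the transposed Vandermonde steps faithfully recover each $h_i \bmod q_j$, recombine by CRT, and then tally the cost line by line. Your final paragraph is actually more careful than the paper on one point: the paper's proof says ``if $p$ is a good \diffp{} for $\supp(FG)$, then by definition there will be no collisions in $\S_p$,'' which does not literally follow, whereas you correctly observe that \ref{proc:getpqw} is called with the size and diameter of the \emph{full} set $\S$, so $p$ is a good \diffp{} for all of $\S$ and the full Vandermonde matrix $\V(v_1,\ldots,v_S)$ is invertible. The only thing the paper adds that you omit is a brief ``conversely'' remark that failures in $p,q,\omega$ are either detected or harmless, but this is not needed for the stated guarantee.
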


\begin{proof}
We first analyze the probability of failure when $\S \supseteq \supp(FG)$.
The randomization is in the choices of
$p$, $q$, and $\omega$; problems can occur if these lack the required
properties.

If $p$ is a good \diffp{} for $\supp(FG)$, then by definition there will be no collisions
in $\S_p$. Furthermore, if $\omega$ is a $p$th root of
unity modulo $q$, then there are no collisions among the values
$(v_1,\ldots,v_S)$, so $\V(v_1,\ldots,v_S)$ is invertible modulo $q$.
Algorithm \ref{proc:getpqw} ensures this is the case with high
probability, and if so the algorithm here faithfully computes each
coefficient $h_i$ modulo $q$.

Conversely, if there are no collisions in $\S_p$,
and if no zero divisors modulo $q$ are encountered in the application of
the Vandermonde matrix and its inverse, then the algorithm correctly
computes then values $h_i \bmod q$, even if $p$ is not prime or some $\omega \in \ZZ_q$ is not actually a
$p$th root of unity.

Therefore all failures in choosing tuples $p,q,\omega$ are either
detected by the algorithm or do not affect its correctness. Since that
is the only randomized step, we conclude that the entire algorithm is
correct whenever the input exponent set $\S$ contains the support of the
product.

For the complexity analysis first
define $D = \deg (FG)$.
Step~\ref{proc:ks:choosep} costs 
$\softohmu(\log C \cdot \polylog(S + \log D))$ bit operations
by Lemma~\ref{lem:getpqwcost}.
Reducing each exponent $e_i$ modulo $p$, on step~\ref{proc:ks:reduce},
can be done for a total of
$\softohmu(\shortsum_{e\in\S} \log e)$ bit operations.

Now we examine the cost of the for loop that begins on
step~\ref{proc:ks:eachq}.
As the exponents are now all less than $p$, computing each $v_i$ on
step~\ref{proc:ks:binpow} requires only $\bigoh(\log p)$ operations modulo
$q$, for a total of $\bigoh(S\log p \log q)$, which is 
$\softohmu(S \cdot \polylog(\log C + \log D))$ bit operations. From before we
know that
applying the transposed Vandermonde matrix and its
inverse takes $\softohmu(S\log q)$, or 
$\softohmu(S \cdot \polylog(\log C + \log D))$ bit operations.

Because $\#Q \le \lceil\log_p (2C)\rceil$,
the loop on step~\ref{proc:ks:eachq} repeats $\bigoh(\log C)$ times, for a
total cost of $\softohmu(S \log C \cdot \polylog(\log D))$ bit operations.
This also bounds the cost of the Chinese remaindering in the final loop.
\end{proof}

\section{Multiplication algorithms}\label{sec:smul}

The complete multiplication algorithm over $\ZZ[x]$ that
was outlined in the introduction is presented as
\ref{proc:sparsemulzz}.

\begin{procedure}[Htbp]
  \caption{SparseMultZZ($F,G$)\label{proc:sparsemulzz}}
  \KwIn{Sparse $F,G\in\ZZ[x]$; $\mu \in (0,1)$.}
  \KwOut{Sparse $H\in\ZZ[x]$,
    such that $\Pr[H\ne FG] < \mu$.}
  $\S_1 \gets $\ref{proc:sumset}$(\supp(F), \supp(G), \tfrac{\mu}{4})$ 
    \label{proc:smz:sumset}\;
  $C_H \gets \norm{F}_\infty \norm{G}_\infty \max(\#F,\#G)$ 
    \label{proc:smz:height} \;
  $p \gets $\ref{proc:vanp}$(\#\S_1, C, 1, \tfrac{\mu}{4})$ \;
  $H_1 \gets $\ref{proc:knownsupp}$(F \rem p, G \rem p, \S_1,
    \tfrac{\mu}{4})$ \label{proc:smz:ks1} \;
  $\S_2 \gets \supp(H_1 \rem p)$ \;
  \Return \ref{proc:knownsupp}$(F, G, \S_2, \tfrac{\mu}{4}), \S_2$ \;
\end{procedure}

\begin{proof}[\iftoggle{short}{}{Proof }of Thm.~\ref{thm:main}]
  Unless failure occurs, we have $\S_1=\poss(F,G)$. 
  Every coefficient in $H$ is a sum of products of coefficients in $F$
  and $G$, so the value $C_H$ computed on step \ref{proc:smz:height} is
  an upper bound on $\norm{H}_\infty$, and $p$ is a good \vanp{} for
  $H$. Thus $\S_2 = \supp(H \rem p) = \supp(H)$, so the final step
  correctly computes the product $FG$.

  By the union bound, Lemma
  \ref{lem:vanp} and Theorems \ref{thm:supp} and 
  \ref{thm:knownsupp},
  the probability of failure is less than $\mu$. 

  Writing $T=\#\poss(F,G)$ and $S=\#\supp(FG)$, we see that
  $\log p \le \polylog(T + \log C + \tfrac{1}{\mu})$, thus 
  steps \ref{proc:smz:sumset} and \ref{proc:smz:ks1} 
  contribute
  $\softohmu(T\log D)$ to the overall cost, whereas the last step
  costs $\softohmu(S(\log D + \log C))$ bit operations. As
  $S\le T$, the total bit complexity is $\softohmu(T\log D + S\log C)$,
  as required.
\end{proof}

We now consider extensions of this algorithm to
positive and negative exponents (Laurent polynomials), multiple variables,
and other common coefficient rings, using Kronecker substitution.  This
is stated in the following theorem.

\begin{restatable}{thm}{extensions}
  Let $F,G \in \R[x_1^{\pm 1},\ldots,x_n^{\pm 1}]$
  be sparse Laurent polynomials over $\R$, where $\R=\ZZ$,
  $\ZZ_m$, $\mathsf{GF(q^e)}$,
  or $\QQ$. The product $FG$ can be computed using
  \iftoggle{short}{$\softohmu(T(n\log D + B))$ }%
  {$$\softohmu\left(T\left(n\log D + B\right)\right)$$}
  bit operations, where
  $T = \#\poss(F,G)$ is the structural sparsity of the product,
  $D>\max_i |\deg_i (FG)|$, and $B$ is the largest bit-length of any
  coefficient in the input or output.
\end{restatable}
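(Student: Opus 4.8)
**The plan is to reduce the general case to the integer univariate case via Kronecker substitution, handling the coefficient ring separately from the exponent encoding.**

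The plan is to proceed in two independent reduction steps: first reducing multivariate Laurent exponents to univariate nonnegative exponents, and then reducing the coefficient ring $\R$ to $\ZZ$. For the exponent reduction, I would first shift each variable $x_j$ by the transformation $x_j \mapsto x_j$ applied after translating exponents so that all exponents of $F$ and $G$ become nonnegative; since $D > \max_i|\deg_i(FG)|$, each exponent $e_{ij}$ lies in a window of width $O(D)$, so after translation we have exponents in $[0, 2D)$ in each of the $n$ variables. Then apply the Kronecker map $x_j \mapsto y^{(2D)^{j-1}}$ for $1 \le j \le n$; this sends a monomial $x_1^{a_1}\cdots x_n^{a_n}$ to $y^{\sum_j a_j (2D)^{j-1}}$, a univariate polynomial of degree less than $(2D)^n$. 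Crucially, this map is injective on the exponent windows, so it does not introduce spurious collisions, and the structural sparsity $T = \#\poss(F,G)$ is preserved. The exponents of the univariate image have bit-length $O(n\log D)$, so the exponent set $\S$ has bit-size $O(Tn\log D)$, matching the first term in the claimed complexity. Inverting the Kronecker substitution after multiplication is a base-$(2D)$ digit extraction, done in softly linear time in the output size.

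For the coefficient ring, I would handle each case: over $\ZZ$, the result is exactly Theorem~\ref{thm:main} applied to the Kronecker image, giving $\softohmu(S\log C + T\log D')$ with $D' = (2D)^n$, i.e.\ $\softohmu(S B + Tn\log D) \subseteq \softohmu(T(n\log D + B))$. Over $\QQ$, I would clear denominators by multiplying $F$ and $G$ by the respective LCMs of their coefficient denominators, reducing to the $\ZZ$ case with $B$ absorbing the bit-length of these LCMs (which is bounded by the input coefficient bit-length), then divide the result back out. Over $\ZZ_m$, I would lift to $\ZZ$ with coefficients in $[0,m)$, multiply over $\ZZ$ (the product coefficients have bit-length $O(\log m + \log T) = O(B + \log T)$), and reduce modulo $m$ at the end; the extra $\log T$ is absorbed into the soft-oh. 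Over $\mathsf{GF}(q^e)$, I would represent elements as polynomials in $\ZZ_q[z]/\la\pi(z)\ra$ for an irreducible $\pi$ of degree $e$, and use a second Kronecker-type substitution $z \mapsto y^N$ for $N$ large enough (roughly $2e$ times the $y$-degree bound) to encode the $\mathsf{GF}(q^e)$-arithmetic into $\ZZ_q[y]$, then into $\ZZ[y]$, reducing modulo $q$ and modulo $\pi$ at the end; this multiplies $D'$ by a factor that is still $\exp(O(n\log D + B))$ so the bit-lengths remain $O(n\log D + B)$ as required.

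\textbf{The main obstacle is the $\mathsf{GF}(q^e)$ case}, where one must simultaneously Kronecker-substitute the variables \emph{and} the field extension generator without blowing up the degree by more than a polynomial factor, and verify that no collisions are introduced by either substitution. One must check carefully that the combined degree bound $D'' = (2D)^n \cdot \poly(\ldots)$ still has $\log D'' \in O(n\log D + B)$, where $B$ accounts for $\log q$ and $e$, so that $T\log D'' \in \softohmu(T(n\log D + B))$; this works because $e\log q$ is itself $O(B)$. A secondary subtlety is confirming that the structural sparsity $T$ — defined via $\poss(F,G)$ over the multivariate exponent \emph{tuples} — is exactly preserved under the exponent Kronecker map and is the quantity controlling the cost, and that translating exponents to make them nonnegative (for the Laurent case) does not change $T$ and only affects the final output by an easily-removable monomial factor. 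Everything else is routine bookkeeping on bit-lengths, with the reductions composing cleanly because each step preserves sparsity and multiplies degree bounds by factors whose logarithms are within the allowed budget.
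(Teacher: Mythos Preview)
Your overall strategy---Kronecker substitution on the exponents, followed by a case-by-case reduction of the coefficient ring to $\ZZ$---matches the paper's approach. However, two of your ring reductions contain genuine gaps.

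For $\R=\QQ$, you assert that the bit-length of the denominator LCM ``is bounded by the input coefficient bit-length''. This is false: if $F$ has pairwise coprime denominators each of bit-length $b$, then $\log L_F = \Theta(\#F\cdot b)$, which can be $\Theta(T\cdot B)$ rather than $O(B)$. Concretely, take $F=\sum_{i=1}^{k} x^{i}/p_i$ for the first $k$ primes $p_i$ and $G=1$; then $B=O(\log k)$ but your cleared-denominator integers have height $\Theta(k\log k)$, so your cost is $\softohmu(T^2\log T)$ against a claimed $\softohmu(T\log T)$. The paper instead chooses a single prime $q$ with $\log q = O(B)$ (large enough that every output coefficient $a/b$ satisfies $2|a||b|<q$), computes $FG$ over $\ZZ_q$, and recovers the rationals by rational reconstruction; this keeps all working coefficients at bit-length $O(B)$.

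For $\R=\mathsf{GF}(q^e)$, your substitution $z\mapsto y^{N}$ folds the extension variable into the polynomial variable. You check that $\log D''$ stays within budget, but you overlook that this destroys the \emph{sparsity}: each term of $F$ now splits into up to $e$ terms, and the structural sparsity of the product becomes $\Theta(eT)$ rather than $T$. The resulting cost is $\softohmu(eT\,n\log D + TB)$, which exceeds the bound whenever $e$ is not polylogarithmic (take $q=2$, $e=T$, $n=1$, $\log D=T$: you get $\softohmu(T^3)$ versus the claimed $\softohmu(T^2)$). The paper instead Kronecker-packs each \emph{coefficient} into a single integer---view $c\in\mathsf{GF}(q^e)$ as a polynomial in $\ZZ_q[z]$ of degree $<e$ and evaluate at a large integer---yielding an element of $\ZZ$ of bit-length $O(e\log q)=O(B)$ while preserving the sparsity $T$ exactly.

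Your Laurent handling (shift exponents nonnegative before Kronecker) is fine; the paper simply notes that its subroutines already tolerate negative exponents directly, which is a cosmetic difference only.
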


\begin{proof}
Write the output polynomial $H=FG$ as
\[H = \textstyle\sum_{i=1}^T c_i x_1^{e_{i1}} x_2^{e_{i2}} \cdots x_n^{e_{in}},\]
where each $c_i\in\R$ and each $e_{ij}$ satisfies
$|e_{ij}| < D$. %

We first apply the Kronecker substitution, %
providing an easily-invertible map between
$\R[x_1^{\pm 1},\ldots,x_n^{\pm 1}]$ and $\R[z^{\pm 1}]$: $x_i \mapsto z^{D^{i-1}}$
for $1 \leq i \leq n$.
This increases the degree to $D^n$, such that the
logarithm of this degree $O(n\log D)$,
matching the exponent bit-size in the multivariate representation. 

The algorithm \ref{proc:sumset} already handles negative exponents (i.e.,
Laurent polynomials) explicitly. The other primary subroutine to 
\iftoggle{short}{}{procedure} \ref{proc:sparsemulzz} is
\ref{proc:knownsupp}, which only uses the exponents in the set $\S_p$,
which are reduced modulo $p$ and therefore cause no difficulty if they
are negative. Thus the multiplication algorithms handle univariate 
Laurent polynomials without any changes.

To extend the multiplication algorithm and it subroutines beyond
$\R=\ZZ$, we use that our algorithm is also
softly-linear in the input \emph{heights}. This allows us to
adapt to many coefficient domain that provides a natural mapping to the integers,
and to preserve softly-linear time if that mapping
provides only a softly-linear increase in size.

For a modular ring $\R=\ZZ_m$, we can trivially treat the inputs as actual integers,
then reduce modulo $m$ after multiplying.
For a finite field $\R=\mathsf{GF}(p^d)$, elements are typically
represented as polynomials over $\ZZ_p[z]$ modulo a degree-$d$ irreducible
polynomial, so these coefficients can be converted to integers using a
low-degree Kronecker substitution.
For the rationals $\R=\QQ$, we might choose a prime $q$ larger than
the product of the largest numerator and denominator in the output,
multiply modulo $q$, then use rational reconstruction to recover the
actual coefficients.

In all the above cases, there is growth in the bit-length of
coefficients, but only in poly-logarithmic terms of input and 
output size, therefore not affecting the soft-oh complexity. 
The only downside is that we are no longer able to split the cost neatly
between $T=\poss(F,G)$ and $S=\supp(FG)$ because the unreduced integer
polynomial product might have nonzero coefficients which are really
zeros in \R.
\end{proof}

\section*{Acknowledgements}
\addcontentsline{toc}{section}{Acknowledgements}
We thank Mark Giesbrecht for helpful discussions on the development of
this paper (and much more).

We thank Timothy Chan for bringing \cite{CH02} to our attention,
and the ISSAC 2015 referees for their constructive feedback.

The first author is supported by the Natural Sciences and Engineering Research Council of Canada (NSERC).

The second author is supported by National Science Foundation
award no.\ 1319994, ``AF: Small: RUI: Faster Arithmetic for Sparse
Polynomials and Integers.''

\renewcommand{\bibpreamble}{\addcontentsline{toc}{section}{References}}
\bibliographystyle{plainnat}

%\bibliography
\newcommand{\Gathen}{\relax}\newcommand{\Hoeven}{\relax}

\end{document}